\documentclass{notices}

\usepackage{amsmath}

\usepackage{amsmath}   %!  setspace is used to control linepacing
\usepackage{amsfonts}

\usepackage{latexsym}
\usepackage{upgreek}
\usepackage{amssymb}

\usepackage{amsthm}
\usepackage{graphicx}
\usepackage{bm}
\usepackage{bbm}
\usepackage{dsfont}
\usepackage{xcolor}
\usepackage{comment}
\numberwithin{equation}{section}

\newcommand{\dkl}{D_{\mbox {\tiny{\rm KL}}}}

\newcommand{\dive}{\text{div}}

\newcommand{\R}{\mathbb{R}}
\newcommand{\veps}{\varepsilon}
\newcommand{\T}{\mathcal{T}}

\newcommand{\M}{\mathcal{M}}

\newtheorem{theorem}{Theorem}[section]

\newtheorem{proposition}[theorem]{Proposition}
\newcommand{\argmin}{\mbox{argmin}}
\theoremstyle{remark}
% This sets the font so that definitions etc are not in italics
\newtheorem{remark}[theorem]{Remark}
\newtheorem{example}[theorem]{Example}

\theoremstyle{remark}

 \renewenvironment{remark}
  {\pushQED{\qed}\remarkx}
  {\popQED\endremarkx}

 \renewenvironment{example}
  {\pushQED{\qed}\exx}
  {\popQED\endexx}

%%%%%%%% BH Commands %%%%%%%%%%%%%%
\newcommand{\bh}[1]{{\nc #1}}

\newcommand{\red}{\nc}
\newcommand{\nc}{\normalcolor}

\newcommand{\mY}{\mathcal{Y}}
\DeclareMathOperator*{\argmax}{arg\,max}

\title{From Optimization to Sampling Through Gradient Flows}

 %The Bayesian update: variational formulations and gradient flows
 
%Bayes Rule as an Optimization Problem:
%Variational Inference, Sampling and $\Gamma$-convergence}
%Bayesian Inference as an Optimization Problem

\author{
  N. Garc\'ia Trillos
  \affil{
   Assistant Professor, Department of Statistics, University of Wisconsin-Madison, Madison, WI 53706, USA, garciatrillo@wisc.edu
    }
  \and
  B. Hosseini
  \affil{
  Assistant Professor, Department of Applied Mathematics, University of Washington,   
Seattle, WA, USA, bamdadh@uw.edu
   }
   \and
   D. Sanz-Alonso
     \affil{
   Assistant Professor, Department of Statistics, University of Chicago, Chicago, IL 60637, USA, sanzalonso@uchicago.edu
   }
}

\begin{document}

\maketitle

Optimization and sampling algorithms play a central role in science and engineering as they enable finding optimal predictions, policies, and recommendations, as well as expected and equilibrium states of complex systems. The notion of `optimality' is formalized  by the choice of an objective function, while the notion of an `expected' state is specified by a probabilistic model for the distribution of states. Optimizing rugged objective functions and sampling multi-modal distributions is computationally challenging, especially in high-dimensional problems. For this reason, many optimization and sampling methods have been developed by researchers working on disparate fields such as Bayesian statistics, molecular dynamics,  genetics, quantum chemistry, machine learning, weather forecasting, econometrics, and medical imaging.

State-of-the-art algorithms for optimization and sampling often rely on \emph{ad-hoc} heuristics and empirical tuning, but some unifying principles have emerged that greatly facilitate the understanding of these methods and the  communication of algorithmic innovations across scientific communities. This article is concerned with one such principle: the use of gradient flows, and discretizations thereof, to  design and analyze optimization and sampling algorithms. The interplay between optimization, sampling, and gradient flows is an active research area and a thorough review of the extant literature is beyond the scope of this article\footnote{AMS Notices limits to $20$ the references per article;  we refer to  \cites{trillos2020bayesian,chewi2020exponential,garbuno2020interacting} for further pointers to the literature.}.  Our goal is to provide an accessible and lively introduction to some core ideas, emphasizing that gradient flows uncover the conceptual unity behind several existing algorithms and give a rich mathematical framework for their rigorous analysis. 
 
We present motivating applications in section~\ref{sec:motivation}. 
section~\ref{sec:GradDescent} is focused on the gradient descent approach to optimization and introduces fundamental ideas such as preconditioning, convergence analysis, and 
time discretization of gradient flows. Sampling is discussed in section~\ref{sec:Sampling}
in the context of Langevin dynamics viewed as a gradient flow of the Kullback-Leibler (KL) divergence 
with respect to (w.r.t) the Wasserstein geometry. Some modern applications of 
gradient flows for sampling are discussed in section~\ref{sec:otherGFs},
followed by concluding remarks in section~\ref{sec:conclusion}.

\section{Motivating Applications}\label{sec:motivation}
%Before introducing gradient flows 
We outline two applications in Bayesian statistics and molecular dynamics that illustrate some important challenges in optimization and sampling. 
\subsection{Bayesian Statistics}
\label{sec:Bayesian}
 %Statistical inference, the task of inferring an 
 %unknown parameter from indirect and noisy observations, 
 %is ubiquitous in scientific and engineering tasks 
 %impacting our day to day lives. For example: 
% In physics and chemistry
 %we often design models with parameters that need to 
 %be inferred from data in order to accurately explain 
 %natural processes \cite{}; medical imaging techniques 
 %such as computed tomography \cite{} or 
 %magnetic resonance imaging \cite{}
 %use indirect measurements to reconstruct the properties of 
 %various tissues inside a patient; in supervised machine 
 %learning tasks such as classification \cite{} or recommendation 
 %\cite{} we wish to predict the output of a system given 
 %limited and noisy training data. 
  In Bayesian statistics \cite{gelman1995bayesian}, an initial belief about 
 an unknown parameter is updated as data becomes 
 available.  Let $\theta$ denote the unknown 
 parameter of interest belonging to the parameter space $\Theta$
  and let $\pi_0(\theta)$ denote the \emph{prior} distribution 
  reflecting our initial belief. Furthermore, let $y$ 
  be the observed data also belonging to an appropriate space $\mY$. 
  Then Bayes' rule identifies the distribution of $\theta$ conditioned on the data $y:$ 
  \begin{equation}\label{bayes-rule}
      \pi(\theta | y) \propto \pi(y | \theta) \pi_0(\theta),
  \end{equation}
  where $\propto$ indicates that the right-hand side should 
  be normalized  to define a probability distribution. 
  Here $\pi(y | \theta)$ is called the \emph{likelihood} function
  and $\pi(\theta | y)$ is called the \emph{posterior} distribution. 
  Bayesian inference on $\theta$ is based on 
  the posterior, which blends the information  in the  
  prior and the data.

  The choice of prior and likelihood
  is a modeling task which, perhaps surprisingly, is 
  often not the most challenging aspect of Bayesian inference.  
  The main challenge is to extract information from  
  the posterior since (i) it typically does not belong to a standard  family of distributions,
  unless in the restrictive case of conjugate models \cite{gelman1995bayesian}; (ii) the 
  parameter $\theta$ can be high dimensional; and (iii) the normalizing constant $\int_\Theta \pi(y|\theta) \pi_0(\theta)\, d\theta$ in \eqref{bayes-rule} (known as the {\it marginal likelihood}) is rarely available and it can be expensive to compute. 
  These practical hurdles inform the design of optimization and sampling algorithms to find posterior statistics.

  Of particular importance is  the posterior mode or 
  \emph{maximum a posteriori (MAP)} estimator
  $$\theta_{\mbox {\tiny{\rm MAP}}} 
   := \argmax_\theta \pi(\theta | y).$$
  Many optimization algorithms for MAP estimation start from an initial guess $\theta_0$ and produce iterates $\{\theta_n\}_{n=1}^N$ by discretizing a gradient flow with the property that $\theta_n \approx  \theta_{\mbox {\tiny{\rm MAP}}} $ for large $n$. Such  gradient flows in parameter space 
  will be discussed in section \ref{sec:GradDescent}.
  
    Computing MAP estimators is closely related to classic regularization 
  techniques such as penalized least squares and Tikhonov regularization \cite{sanz2018inverse}. 
  In order to fully leverage the Bayesian framework, it is often desirable to consider other posterior statistics such as mean, variance, credible intervals, and task-specific functional moments, 
  which can be written in the form
  $$ \mathbb{E}^{\pi(\cdot | y)} [ \phi(\theta)] := \int \phi(\theta) \pi(\theta | y) \, d\theta,$$
  where $\phi$ is a suitable test function. Since $\theta$ is often high dimensional, the standard approach to compute these expectations  is to use Monte Carlo \cite{liu2001monte}: 
   obtain $N$ samples $\{\theta_n\}_{n=1}^N$
  from the posterior $\pi(\theta | y)$, and then approximate
  $$ \mathbb{E}^{\pi(\cdot | y)} [ \phi(\theta)] \approx  \frac1N \sum_n \phi(\theta_n).$$
  While Monte Carlo integration is in principle scalable to high dimensions, the task of generating posterior samples is still highly non-trivial. To that end, one may consider sampling $\theta_n \sim \rho_n,$ where the sequence $\{\rho_n\}_{n=1}^N$ arises from discretizing a gradient flow with the property that $\rho_n \approx \pi(\cdot| y) $ for large $n.$
  Such gradient flows in the space of probability distributions will be discussed in sections \ref{sec:Sampling} and \ref{sec:otherGFs}.
  
 To relate the discussion above to subsequent developments, we note that dropping the data $y$ from the notation, the posterior density can be written as 
 \begin{equation}\label{V-definition}
     \pi(\theta) = \frac{1}{Z} \exp \bigl( - V(\theta) \bigr),
 \end{equation}
 where $Z = \int_\Theta \pi(y|\theta) \pi_0(\theta)\, d\theta$ is the marginal likelihood and $V: \Theta \times \mY  \to \R$ is the negative logarithm of the posterior density. %\dsa{ Indeed, writing 

 \subsection{Molecular Dynamics}
  Another important source of challenging optimization and sampling problems is statistical mechanics, and in particular the simulation of molecular dynamics (see chapter 9 in \cite{liu2001monte}). According to Boltzmann and Gibbs, the positions $q$ and momenta $p$ of the atoms in a molecular system of constant size, occupying a constant volume, and in contact with a heat bath (at constant temperature), are distributed according to
\begin{equation}\label{eq:boltzmann}
\pi(q,p) = \frac{1}{Z} \exp\Bigl(-\beta \bigl(U(q)+K(p)\bigr)\Bigr),
\end{equation}
where $Z$ is a normalizing constant known as the partition function, $\beta$ represents the inverse temperature,  $U$ is a potential energy describing the interaction of the particles in the system, and $K$ represents the kinetic energy of the system. Letting $\theta := (q,p),$ we can write the Boltzmann-Gibbs distribution \eqref{eq:boltzmann} in the form \eqref{V-definition}, with
\begin{equation}\label{eq:Vstatmech}
    V(\theta) = -\beta \bigl(U(q)+K(p)\bigr).
\end{equation}

As in Bayesian statistics, it is important to determine the most likely configuration 
of particles (i.e. the mode of $\pi$), along with expectations of certain test functions w.r.t. the Boltzmann distribution. These two tasks motivate the need for optimization and sampling algorithms that acknowledge that the potential $U$ is often a rough function with many local minima, that the dimension of $q$ and $p$ is large, and that finding the normalizing constant $Z$ is challenging.

%Why is it useful to employ the aforementioned variational characterizations 
%of the posterior distribution and its statistics? We list below three advantages.
%\begin{enumerate}
%\item The variational formulation provides a natural way to approximate the posterior by restricting the minimization problem to distributions $q(u)$ satisfying some computationally desirable property. 
%For instance, variational Bayes methods often restrict the minimization to $q(u)$ with product structure \cite{wainwright2008graphical}.  A similar idea is studied in \cite{pinski2015kullback}, where $q(u)$ is restricted to a class of Gaussian distributions. 
%\item Each variational formulation gives a natural path, defined by a gradient flow, towards the posterior. These flows can be thought of as time-parameterized curves in the space of probability measures, converging in the large-time limit towards the posterior. This observation can be used to guide the choice of metric in Riemannian MCMC methods. 
%\item If the prior $p_{\varepsilon_n}(u)$ or the likelihood $L_{\varepsilon_n}(y|u)$ depend on a parameter $\epsilon_n,$ then the variational formulation allows to show large $n$ convergence of posteriors $p_{\varepsilon_n}(u|y)$ by establishing the $\Gamma$-convergence of the associated energies. This method of proof has been employed by the authors in \cite{trillos2018continuum}, \cite{trillos2020consistency} to analyze the large-data consistency of graph-based Bayesian semi-supervised learning. 
%\end{enumerate}

\section{Optimization}
\label{sec:GradDescent}
In this section we discuss gradient flows for solution of the
% This section is devoted to the use of gradient flows to solve the
unconstrained minimization problem %finding
\begin{equation}
 \textrm{minimize} \, V(\theta) \quad \textrm{s.t.} \,\, \theta \in \Theta, 
  \label{eq:MinV}
\end{equation}
where $V(\theta)$ is a given objective function. Henceforth we take $\Theta := \R^d$ unless otherwise noted. As guiding examples, consider computing the mode of a posterior or Boltzmann distribution by minimizing $V$ given by \eqref{V-definition} or \eqref{eq:Vstatmech}. The methods described in this section are applicable beyond the specific problem of finding the modes, however, this interpretation will be of particular interest in relating the material in this section to our discussion of sampling in section \ref{sec:Sampling}.

% This section is organized as follows. Section \ref{sec:GradSystems} introduces continuous-time gradient systems and section \ref{ssec:convODE} discusses their convergence. Section \ref{sec:Precondition} describes preconditioned dynamics as arising from a change of metric. Optimization algorithms obtained by discretizing gradient systems are presented in section \ref{sec:discODE}.

\subsection{Gradient Systems}
\label{sec:GradSystems}
One of the most standard approaches to solve \eqref{eq:MinV} is \textit{gradient descent}, an optimization scheme that is based on the discretization of the gradient system
\begin{equation}
\label{eq:GradDescent}  
\dot{\theta}_t = - \nabla V(\theta_t), \quad t >0,
\end{equation}
with user-defined initial value $\theta_0$; throughout this article $\nabla V(\theta_t)$ will denote the gradient of the function $V$ at the point $\theta_t$, which will be tacitly assumed to exist wherever needed. 

While equation \eqref{eq:GradDescent} is perhaps the most popular formulation of the continuous-time gradient descent dynamics, the equivalent integral form below reveals more transparently some of its properties:
\begin{equation}
V(\theta_t) =  V(\theta_s) - \frac{1}{2}\int_{s}^t |\nabla V (\theta_r)|^2 dr  - \frac{1}{2} \int_{s}^t |\dot \theta_r|^2 dr, 
\label{eq:EDE}
\end{equation}
for all $t\geq s >0$. Indeed, notice that from \eqref{eq:EDE} it is apparent that  $V(\theta_t) \leq V(\theta_s)$ for $s \leq t$, i.e.,  the value of the function $V$ decreases in time, and in all but a few trivial situations the decrease is strict.  Another advantage of the reformulation \eqref{eq:EDE} (or its inequality form \eqref{eq:EDI} below) is that it can be adapted to more general settings with less mathematical structure than the one needed to make sense of \eqref{eq:GradDescent}. In particular, \eqref{eq:EDE} can be used to motivate a
% definition for the 
notion of gradient flow in arbitrary metric spaces; see \cite{ambrosio2008gradient} for an in-depth discussion of this topic.

\begin{proposition}
\label{prop:EDE}
Suppose $V$ is a $C^1$ function. Then \eqref{eq:GradDescent} and \eqref{eq:EDE} are equivalent  
and they both imply
\begin{equation} 
|\nabla V ( \theta_t)|^2 = |\dot{\theta}_t|^2. 
\label{eq:DissipationAux}
\end{equation}
\end{proposition}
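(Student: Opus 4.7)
The plan is to establish the equivalence via a standard chain-rule computation, and then derive \eqref{eq:DissipationAux} as an immediate corollary. The key algebraic observation is the polarization identity
\[
\tfrac{1}{2}|a|^2 + \tfrac{1}{2}|b|^2 + a\cdot b = \tfrac{1}{2}|a+b|^2,
\]
which will let us convert the integral identity back into the ODE.

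First, assume \eqref{eq:GradDescent} holds. Since $V\in C^1$ and $\theta_t$ is $C^1$ along the trajectory, the chain rule gives $\tfrac{d}{dt}V(\theta_t) = \nabla V(\theta_t)\cdot \dot\theta_t$. Substituting $\dot\theta_t = -\nabla V(\theta_t)$ yields
\[
\tfrac{d}{dt}V(\theta_t) = -|\nabla V(\theta_t)|^2 = -\tfrac{1}{2}|\nabla V(\theta_t)|^2 - \tfrac{1}{2}|\dot\theta_t|^2,
\]
where the last equality uses $|\dot\theta_t|=|\nabla V(\theta_t)|$. Integrating from $s$ to $t$ produces \eqref{eq:EDE}.

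For the converse, suppose \eqref{eq:EDE} holds along an absolutely continuous curve $\theta_t$ for which $\dot\theta_t$ exists (as needed to make the right-hand side meaningful). Differentiating \eqref{eq:EDE} in $t$ via the fundamental theorem of calculus gives, for a.e.\ $t$,
\[
\tfrac{d}{dt}V(\theta_t) = -\tfrac{1}{2}|\nabla V(\theta_t)|^2 - \tfrac{1}{2}|\dot\theta_t|^2.
\]
On the other hand, the chain rule gives $\tfrac{d}{dt}V(\theta_t) = \nabla V(\theta_t)\cdot \dot\theta_t$. Equating the two expressions and applying the polarization identity above with $a=\nabla V(\theta_t)$ and $b=\dot\theta_t$, I obtain
\[
0 = \tfrac{1}{2}|\nabla V(\theta_t)|^2 + \tfrac{1}{2}|\dot\theta_t|^2 + \nabla V(\theta_t)\cdot \dot\theta_t = \tfrac{1}{2}\bigl|\dot\theta_t + \nabla V(\theta_t)\bigr|^2,
\]
which forces $\dot\theta_t = -\nabla V(\theta_t)$, i.e., \eqref{eq:GradDescent}.

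Finally, \eqref{eq:DissipationAux} is immediate from \eqref{eq:GradDescent} by taking norms. The only subtle point in the argument is the converse direction: one implicitly needs enough regularity on $\theta_t$ to differentiate both sides and invoke the chain rule pointwise a.e. For $C^1$ trajectories this is routine; in the broader metric-space setting alluded to after \eqref{eq:EDE}, the analogous step (phrased as an energy dissipation inequality, \eqref{eq:EDI}) is where the real technical work lies, but in the present Euclidean $C^1$ setup, it reduces to the elementary manipulation above.
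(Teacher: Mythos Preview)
Your argument is correct and essentially matches the paper's proof: both hinge on the fact that the inequality $-\langle \nabla V(\theta_t),\dot\theta_t\rangle \le \tfrac{1}{2}|\nabla V(\theta_t)|^2 + \tfrac{1}{2}|\dot\theta_t|^2$ becomes an equality precisely when $\dot\theta_t = -\nabla V(\theta_t)$. The only cosmetic difference is that the paper phrases this as the equality case of Cauchy--Schwarz/Young and works in integrated form, whereas you differentiate \eqref{eq:EDE} first and then invoke the polarization identity pointwise.
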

\begin{proof}
By Cauchy-Schwartz and Young's inequalities, for any $t>0$ it holds that
\begin{align*}
    \red - \nc \langle \nabla V(\theta_t), \dot{\theta}_t \rangle &\le | \nabla V(\theta_t) | |\dot{\theta}_t | \\
    & \le  \frac12 |\nabla V(\theta_t) |^2 + \frac12 |\dot{\theta}_t |^2,
\end{align*}
and both inequalities are equalities iff $-\nabla V(\theta_t) = \dot{\theta}_t.$ 
Therefore, 
\begin{align*}
    V(\theta_t) &= V(\theta_s) + \int_s^t \langle \nabla V(\theta_r), \dot{\theta}_r \rangle dr \\
                & \ge  V(\theta_s) - \frac12 \int_s^t |\nabla V(\theta_r)|^2 dr - \frac12 \int_s^t |\dot{\theta_r}|^2 dr,
\end{align*}
and equality holds iff $-\nabla V(\theta_t) = \dot{\theta}_t$ for all $t>0.$  The identity $|\nabla V(\theta_t)|^2 = |\dot{\theta}_t|^2$ follows directly from \eqref{eq:GradDescent}.
\end{proof}

Notice that the relationship $\dot \theta_t = - \nabla V(\theta_t)$ is only required in proving the \textit{energy dissipation inequality}
\begin{equation}
V(\theta_t) \leq  V(\theta_s) - \frac{1}{2}\int_{s}^t |\nabla V (\theta_r)|^2 dr  - \frac{1}{2} \int_{s}^t |\dot \theta_r|^2 dr, 
\label{eq:EDI}
\end{equation}
since the reverse inequality is a consequence of Cauchy-Schwartz.
%The equivalence between \eqref{eq:GradDescent} and \eqref{eq:EDE} follows from the Cauchy-Schwartz inequality and the chain rule. Recall that the Cauchy-Schwartz inequality states that for
%any vectors $u,v \in \R^d $ it holds 
%\[ \langle u, v \rangle \leq |u| |v| \leq \frac{1}{2}|u|^2 + \frac{1}{2} |v|^2,   \]
%where $\langle \cdot, \cdot \rangle$ is the Euclidean inner product.
%Both inequalities become equalities if and only if $u=v$. 
%Combining the above with the chain rule we conclude that the statement
%\[ -\frac{d}{dt} V(\theta_t) = \langle - \nabla V (\theta_t) , \dot{\theta}_t \rangle = \frac{1}{2}|\nabla V(\theta_t)|^2 + \frac{1}{2}|\dot{\theta}_t|^2, \]
%which is the differential form of \eqref{eq:EDE}, is equivalent to \eqref{eq:GradDescent}. We remark that the specific form $\dot \theta_t = - \nabla V(\theta_t)$ is only required in proving the so called \textit{energy dissipation inequality}
%\begin{equation}
%V(\theta_t) \leq  V(\theta_s) - \frac{1}{2}\int_{s}^t |\nabla V (\theta_r)|^2 dr  - \frac{1}{2} \int_{s}^t |\dot \theta_r|^2 dr, 
%\label{eq:EDI}
%\end{equation}
% while the reverse inequality is always true, regardless of whether $\theta_t$ satisfies \eqref{eq:GradDescent} or not. We reiterate that this fact is a direct consequence of the two parts of the Cauchy-Schwartz inequality. The identity $|\nabla V(\theta_t)|^2 = |\dot{\theta}_t|^2$ follows directly from \eqref{eq:GradDescent}.
Notice further that \eqref{eq:GradDescent} implies \eqref{eq:DissipationAux}, but in general the converse statement is not true. For example, the flow $\dot{\theta_t} = \nabla V (\theta_t)$ satisfies \eqref{eq:DissipationAux}, but in general does not satisfy \eqref{eq:GradDescent}. Likewise, \eqref{eq:GradDescent} implies $\frac{d}{dt}V(\theta_t) =- |\nabla V(\theta_t)|^2$ (which follows directly from the chain rule), but not conversely. Indeed, in $\R^2$ we may take $A$ to be any orthogonal matrix and consider  $\dot{\theta}_t = -A^2 \nabla V(\theta_t)$ so that $\frac{d}{dt}V(\theta_t) =- |\nabla V(\theta_t)|^2$ but  \eqref{eq:GradDescent} is not, in general, satisfied.
%matrix that acts on a vector by rotating it $60$ degrees counterclockwise and then consider the equation $\dot{\theta}_t = -2A \nabla V(\theta_t)$, which does not satisfy \eqref{eq:GradDescent}, but for which we have $\frac{d}{dt}V(\theta_t) =- |\nabla V(\theta_t)|^2$. 
This digression illustrates that equation \eqref{eq:EDE} captures in one single identity of \textit{scalar} quantities the \textit{vectorial} identity \eqref{eq:GradDescent}, even if it is not as intuitive as other scalar relations. 
% In summary, \eqref{eq:EDE} is a compact identity that, as stated earlier, can be used to motivate definitions for gradient flows in more general mathematical settings.
%\end{remark}

\subsection{A Note on Convergence}\label{ssec:convODE}
Despite the fact that gradient descent satisfies the energy dissipation property, it is in general not true that as time goes to infinity the dynamics \eqref{eq:GradDescent} converge to a global minimizer of \eqref{eq:MinV}. This could happen for different reasons. First, the problem \eqref{eq:MinV} may not have a minimizer (e.g., take $V(\theta) = e^{-\theta}$ for $\theta \in \R$). Second, $\nabla V$ may have critical points associated with saddle points or local optima of $V$ as we illustrate in the next example.

\begin{example}
Consider the double well potential 
\begin{equation}\label{eq:doublewell}
    V(\theta) = \frac{3}{8}\theta^4 - \frac{3}{4}\theta^2, \quad \theta \in \R,
\end{equation}
so that $\nabla V(\theta) = \frac{3}{2} \theta(\theta^2 -1).$ Notice that $\theta = 0$ is an unstable equilibrium of \eqref{eq:GradDescent}, which corresponds to a local maximum of $V.$ For each local minima $\theta = \pm 1$ of $V$ there is an associated subregion in the space of parameters (known as a basin of attraction) such that any initial condition $\theta_0$ chosen in this subregion leads the gradient dynamics toward its corresponding local minimizer, see Figure \ref{fig:deterministic}.
\end{example}

\begin{figure}
    \centering
    \includegraphics[width = 0.8 \linewidth]{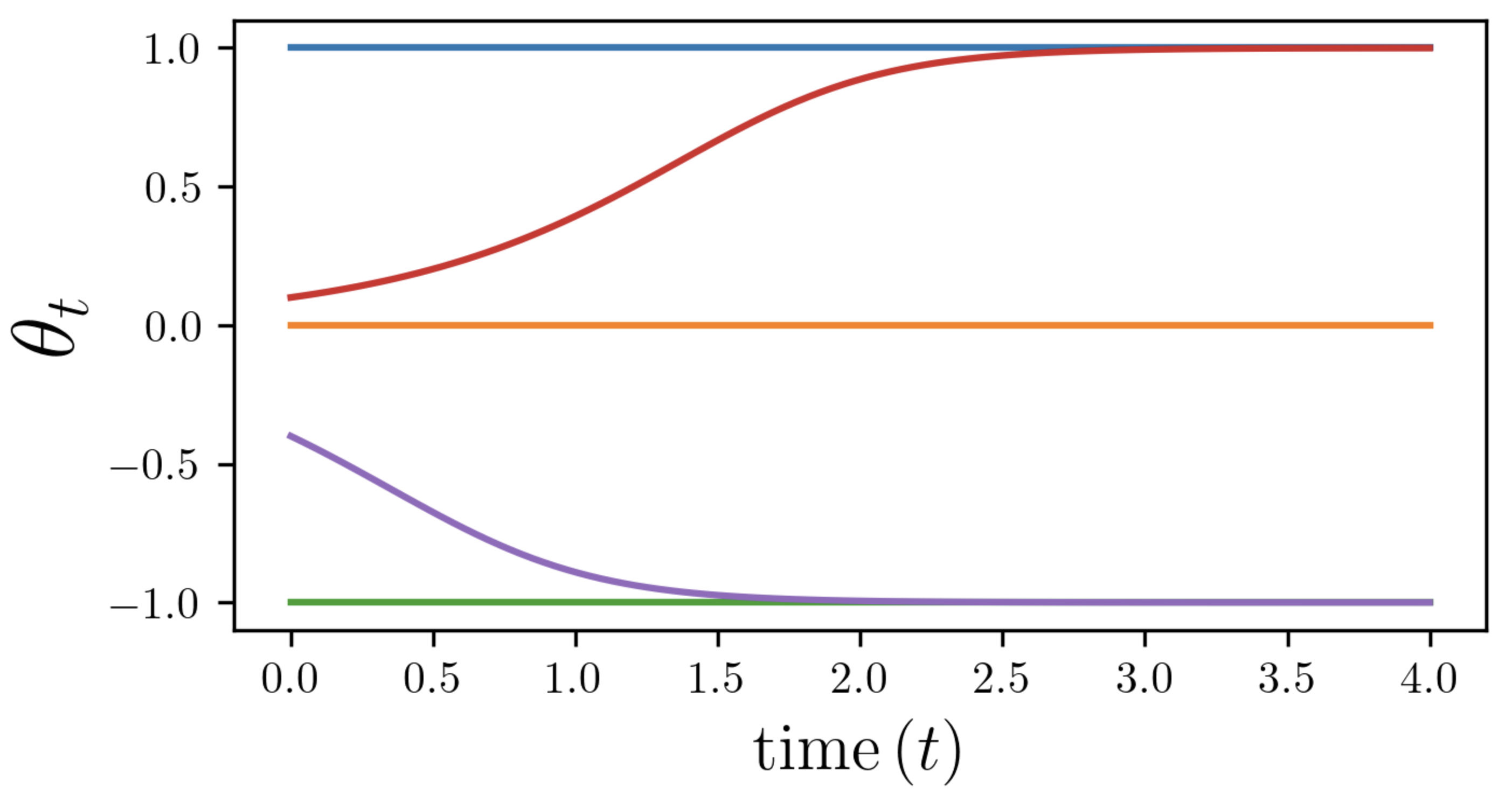}
   \vspace{-2ex} 
   \caption{Five trajectories of the one-dimensional gradient system \eqref{eq:GradDescent} with double well potential \eqref{eq:doublewell}. The objective $V$ has critical points at $\theta = 0$ and $\theta = \pm 1$. The intervals $(0,\infty)$ and $(-\infty,0)$ are basins of attraction for $\theta = 1$ and $\theta = -1,$ respectively.}
    \label{fig:deterministic}
\end{figure}

Suitable assumptions on $V$ prevent the existence of local minimizers that are not global and also imply rates of convergence. One such assumption 
is the Polyak-Lojasiewic (PL) condition \cite{karimi2016linear}:
% \cites{polyak1964gradient, lojasiewicz1963topological}:
\begin{equation}
\alpha (  V(\theta) - V^*) \leq \frac{1}{2}|\nabla V (\theta)|^2 , \quad \forall \theta \in \Theta,
\label{eq:PLCondition}
\end{equation}
where $V^\ast = \inf_{\theta \in \Theta} V(\theta)$ and  $\alpha > 0$ is a constant. Note that the PL 
condition readily implies that any stationary point $\theta^\ast$ of $V$ is a global minimizer, since 
\begin{equation*}
    \alpha (V (\theta^\ast) - V^\ast) \le \frac12 | \nabla V(\theta^\ast) |^2 =0.
\end{equation*}

Under the PL condition we can easily obtain a convergence rate for continuous-time gradient descent.
\begin{proposition}\label{prop:ExpDecay}
Suppose  $V$ satisfies  \eqref{eq:PLCondition}. Then, 
\begin{equation}
\label{eq:ExpDecay}
 V(\theta_t) - V^* \leq ( V(\theta_0) - V^* ) \exp(- 2  \alpha t ), \hspace{1ex} \forall t \ge 0.  
\end{equation}
\end{proposition}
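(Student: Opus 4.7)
The plan is to reduce the statement to a scalar differential inequality of Gronwall type. Define $g(t) := V(\theta_t) - V^\ast$, which is nonnegative by definition of $V^\ast$; the goal becomes showing $g(t) \le g(0)\,e^{-2\alpha t}$.

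First, I would compute $\dot g(t)$ along the gradient flow. By the chain rule and \eqref{eq:GradDescent},
\begin{equation*}
\dot g(t) = \langle \nabla V(\theta_t), \dot\theta_t \rangle = -|\nabla V(\theta_t)|^2.
\end{equation*}
(Equivalently, one could differentiate the identity \eqref{eq:EDE} in $t$, using \eqref{eq:DissipationAux} from Proposition~\ref{prop:EDE} to combine the two integrands.) Next, apply the PL condition \eqref{eq:PLCondition} at the point $\theta_t$, which gives $|\nabla V(\theta_t)|^2 \ge 2\alpha\,(V(\theta_t) - V^\ast) = 2\alpha\, g(t)$. Combining these two facts yields the scalar differential inequality
\begin{equation*}
\dot g(t) \le -2\alpha\, g(t), \qquad t \ge 0.
\end{equation*}

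From here, I would conclude by a standard Gronwall argument: multiplying both sides by the integrating factor $e^{2\alpha t}$ shows that $\frac{d}{dt}\bigl(e^{2\alpha t} g(t)\bigr) \le 0$, so $e^{2\alpha t}g(t) \le g(0)$ for all $t \ge 0$, which rearranges to \eqref{eq:ExpDecay}.

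There is no real obstacle here; the proof is a short three-line calculation once one identifies the correct Lyapunov function $g(t) = V(\theta_t) - V^\ast$. The only minor subtlety is justifying differentiability of $t \mapsto V(\theta_t)$, which follows immediately from the $C^1$ hypothesis on $V$ and the fact that $\theta_t$ is $C^1$ as a solution to \eqref{eq:GradDescent}. It is worth noting conceptually that the argument converts a \emph{geometric} condition on $V$ (the PL inequality, relating function values to gradient norms) into an \emph{exponential} convergence rate for the flow, with the PL constant $\alpha$ playing exactly the role of a contraction rate.
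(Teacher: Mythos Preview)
Your proof is correct and essentially the same as the paper's. The only cosmetic difference is that the paper works with the integral formulation: it substitutes the PL condition into the energy dissipation inequality \eqref{eq:EDI} (together with \eqref{eq:DissipationAux}) to obtain $V(\theta_t) - V^* \le (V(\theta_0)-V^*) - 2\alpha \int_0^t (V(\theta_r)-V^*)\,dr$ and then invokes the integral Gronwall inequality, whereas you differentiate directly and apply the differential Gronwall argument via an integrating factor.
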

\begin{proof}
Using condition \eqref{eq:PLCondition} in equation \eqref{eq:EDI} and recalling 
 \eqref{eq:DissipationAux} we conclude that
\[ V(\theta_t) - V^* \leq  (V(\theta_0)- V^*) -   2 \alpha \int_{0}^t  ( V(\theta_r) - V^*  )dr.  \]
The result follows by Gronwall's inequality. %\cite{Chicone-ODE}*{[Thm.~2.1]}.
\end{proof}
One can verify the PL condition under various assumptions on the function $V$ \cite{karimi2016linear}. 
Here we present, as an important example, the case of $\alpha$-strong convexity.  For $\alpha>0,$ one says that $V$ 
is $\alpha$-\textit{strongly convex} if for any $\theta, \theta' \in \Theta$ it holds that
\[ 
\begin{aligned}
V  \bigl( t \theta & +(1-t)  \theta' \bigr) \leq  \\ 
& t V(\theta) + (1-t) V(\theta') - \frac{\alpha}{2} t(1-t) |\theta- \theta'|^2,
\end{aligned}
\]
for all $t \in [0,1]$. This condition can be shown to be equivalent to
\begin{equation*}
    V(\theta' ) \ge V(\theta) + \langle \nabla  V(\theta),\theta' - \theta \rangle + \frac{\alpha}{2} | \theta' - \theta|^2, \hspace{1ex} \forall \theta, \theta' \in \Theta,
\end{equation*}
from which we can see, after minimizing both sides w.r.t. $\theta'$, that
\begin{equation}
    V^\ast \ge V(\theta) - \frac{1}{2\alpha} | \nabla V(\theta)|^2,
    \label{eq:StrongConvexity}
\end{equation}
which is equivalent to \eqref{eq:PLCondition}. From this we conclude that $\alpha$-strong convexity implies the PL condition with the same constant $\alpha$.
%\begin{remark}

Note that
strong convexity is a stronger condition than the PL condition. For example, the function $V(\theta)=\frac{1}{2}\theta_1^2$ (where $\theta = (\theta_1, \theta_2)$) satisfies the PL condition with $\alpha =1$, but it is not strongly convex.
%\end{remark}

% We now discuss a collection of stronger assumptions on the function $V$ that imply the PL condition. 
% \begin{definition}
% We say that $V$ is $\alpha$-convex if the following idenity holds:
% \end{definition}

% \begin{proposition}
% If $V$ is $\alpha$-convex, then $V$ satisfies the PL condition. In particular, gradient descent of $V$ enjoys of exponential decay. 
% \end{proposition}

\subsection{Choice of the Metric}
\label{sec:Precondition}

%To motivate the content in this section 
Let us consider a function of the form  $V(\theta) = \alpha_1 \theta_1^2 + \alpha_2 \theta_2^2,$ where $\theta = (\theta_1,\theta_2)\in \R^2.$ Suppose that $0 <\alpha_1 \ll \alpha_2 $ and that $\alpha_1$ is very close to zero, as in Figure \ref{fig:ellipse}. 
We can now apply Proposition \ref{prop:ExpDecay} with $\alpha=\alpha_1$, but since we assumed
$\alpha_1$ is small we see that the right-hand side of \eqref{eq:ExpDecay} decreases very slowly. This suggests that gradient descent may take a long time to reach $V$'s global minimum when initialized arbitrarily. 
% {To illustrate more concretely what can go wrong, simply consider initializing the gradient descent dynamics at a vector whose second coordinate is almost zero and whose first coordinate is of order one.}

\begin{figure}
    \centering
    \includegraphics[width = 0.9 \linewidth]{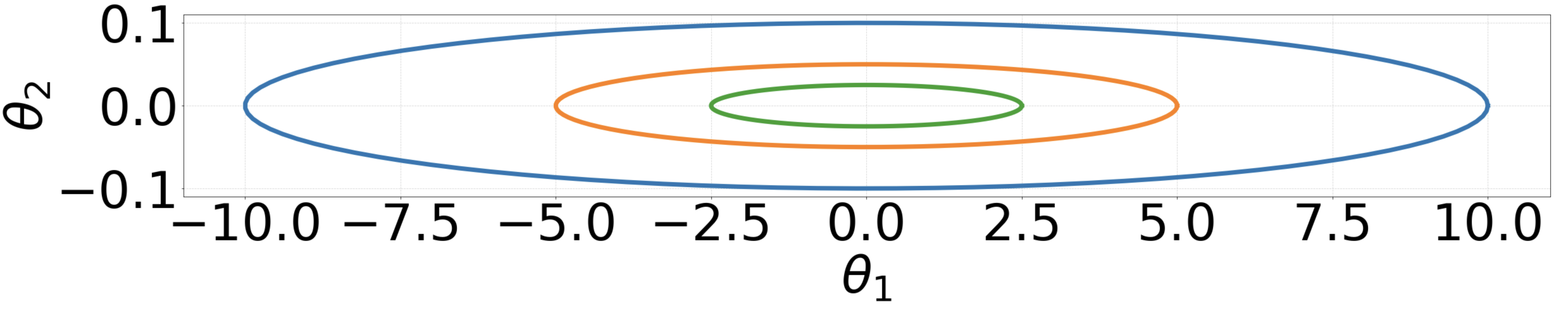}
   \vspace{-2ex} \caption{Level curves of a two-dimensional potential of the form $V(\theta) = \alpha_1 \theta_1^2 + \alpha_2 \theta_2^2$ with $0 <\alpha_1 \ll \alpha_2 $.  The anisotropy of this potential causes the gradient system \eqref{eq:GradDescent} to converge slowly.}
    \label{fig:ellipse}
\end{figure}
{
 The poor behavior of gradient descent described above 
arises whenever there are regions of points away from the minimizer at which the gradient of $V$ is very small.} 
% that are not close to the global minimizer of the objective function $V$ at which the gradient of $V$ is very small.
One approach to remedy this issue is to introduce a more general version of gradient descent that accelerates the dynamics in those regions where the gradient of $V$ is small. This is the {goal of} \textit{preconditioning}. 
% \bh{Let $H$ be a
% continuous} field of positive definite matrices, that is, we consider a function 
Let $H: \Theta \rightarrow \mathcal{S}^d_{++}$ be a continuous field of positive definite matrices, i.e., a 
function that assigns to every point $\theta \in \Theta$ a $d\times d$ positive definite matrix $H(\theta)$. The \textit{preconditioned gradient descent} dynamics induced by $H$ is defined as:
\begin{equation}
\label{eq:PreCondGradDescent}  
\dot{\theta}_t = - H(\theta_t)^{-1} \nabla V(\theta_t), \quad t >0.
\end{equation}
Observe that \eqref{eq:PreCondGradDescent} coincides 
with the original gradient descent dynamics \eqref{eq:GradDescent} when $H$ is constant and equal to the $d\times d$ identity matrix. On the other hand,  when $V$ is convex and twice differentiable, choosing 
$H(\theta) = \nabla^2 V(\theta)$, the Hessian of $V$, results in a continuous-time analog of {\it Newton's algorithm}.
In the example in Figure \ref{fig:ellipse}, we can directly compute  $\nabla^2 V=\left(\begin{matrix} 2 \alpha_1 & 0 \\ 0 & 2 \alpha_2 \end{matrix}\right)$, i.e., the Hessian is a fixed matrix since the potential $V$ is quadratic.
% which coincides with the Hessian of the function $V$. For more general convex and twice differentiable functions $V$, we can make the same choice $H= D^2 V$ and obtain the update rule
% \begin{equation}
% \label{eq:PreCondGradDescent-2}  
% \dot{\theta}_t = - \bigl(D^2V(\theta_t) \bigr)^{-1} \nabla V(\theta_t), \quad t >0,
% \end{equation}
% which is known as \textit{Newton's method}. \red In the case of the function $V$ from Figure \ref{fig:ellipse}, equation \eqref{eq:PreCondGradDescent} 
Substituting this choice of $H$ in \eqref{eq:PreCondGradDescent} for that example gives the dynamics $ \dot{\theta}_t = -\theta_t$, a scheme that achieves a much faster convergence rate.
%regardless of its initialization.

 The reader may wonder if we could have in fact chosen $H = \frac{1}{r} D^2 V$ for large constant $r$ in order to induce a system that converges to equilibrium at a faster rate.  However, as implied by our discussion in section \ref{sec:discODE} and, specifically, Remark \ref{rem:FasterRatesVsDiscretization}, there is no benefit in doing so as the cost of discretizing becomes correspondingly higher; rescaling the Hessian may be simply interpreted as a change of units. In general, there is a natural tension between accelerating continuous-time dynamics by changing the metric of the space, and producing accurate time discretizations for the resulting flows; see \cite{trillos2020bayesian} for a related discussion in the context of sampling algorithms. In a similar vein, we notice that the superior convergence rate and affine invariance of Newton's algorithm comes at the price of utilizing the Hessian matrix $\nabla^2 V$, 
which in many applications can be prohibitively costly to compute or store. To this end, constructing matrix fields $H(\theta)$ that are good proxies for the Hessian and that can be computed efficiently is the goal of preconditioning. Perhaps the most 
well-known family of such algorithms is the family of {\it Quasi-Newton} algorithms \cite{nocedal} and 
in particular the \emph{Broyden–Fletcher–Goldfarb–Shanno} (BFGS) algorithm, which approximates the Hessian using gradients calculated at previous iterates. 
%from the gradient $\nabla V$ from 
%the previous iterations of the algorithm.

% We remark that more general versions of Newton's method exist, including a scheme known as \textit{quasi-Newton}. We refer the interested reader to \cite{}, where a detailed discussion on the quasi-Newton methods is presented.
% \red  \textbf{Note: Discuss briefly Quasi-Newton and add citations}\nc 

\subsubsection{Geometric Interpretation}
As a step toward introducing the material in section \ref{sec:Sampling}, here we give a geometric interpretation of equation \eqref{eq:PreCondGradDescent}. Specifically, we will show that \eqref{eq:PreCondGradDescent} can still be understood as a gradient descent equation but w.r.t. a different metric on the parameter space $\Theta$. For this purpose it is convenient to recall that a \textit{Riemannian manifold} $(\M, g)$ is a manifold $\M$ endowed with a family of inner products {$g=\{ g_{\theta} \}_{\theta \in \M}$} (often referred to as the \textit{metric}), one for each point on the manifold, and which can be used to measure angles between vectors at every tangent plane of $\M$. We will use $\T_\theta \M$ to denote the tangent plane at a given $\theta \in \M$. The standard example of a Riemannian manifold is $\M= \R^d$ with $g_\theta$ the Euclidean inner product at every  point. More general examples of Riemannian manifolds with $\M = \R^d$ can be generated from a field of positive definite matrices $H$. Consider the family of inner products:
\[ g_\theta (u,v) := \bigl\langle H(\theta) u, v \bigr\rangle, \quad u, v \in \R^d, \]
where $\langle \cdot, \cdot \rangle$ denotes the standard Euclidean inner product. 
Notice that $g_\theta$ is indeed an inner product since  $H(\theta)$ is  positive definite.
In what follows we often suppress the dependence of $g$ on $\theta$ for brevity.

We now proceed to define the notion of the gradient of a function $F$ defined over an arbitrary Riemannian manifold. Let $(\M, g)$ be a Riemannian manifold and let $F: \M \rightarrow \R$ be a smooth enough function. The gradient of $F$ at the point $\theta \in \M$ relative to the metric $g$, denoted $\nabla_g F (\theta),$ is defined as the vector in $\T_\theta \M$ for which the following identity holds:
\begin{equation}
   \frac{d}{dt} F \bigl(\gamma(t)\bigr) \big|_{t=0} = g\bigl(\nabla_g F(\theta), \dot{\gamma}(0)\bigr) 
   \label{eq:DefGradient}
\end{equation}
for any differentiable curve $\gamma:(-\veps, \veps) \rightarrow \M$ with {$\gamma(0)=\theta$}; by $\dot{\gamma}(0)$ we mean the velocity of the curve $\gamma$ at time $0$, which is an element in $\T_\theta \M$. In the example of $\M=\R^d$ with inner products induced by a field $H$ (which we denote with $g_H$), {we have that
% $\gamma:(-\varepsilon, \varepsilon) \rightarrow \M $} be a differentiable curve for which $\gamma(0)=\theta$. Then
\begin{align*}
\frac{d}{dt} V \bigl(\gamma(t)\bigr) \big|_{t=0} & = \bigl\langle \nabla V (\theta) , \dot{\gamma}(0) \bigr\rangle  
\\& = \bigl\langle H(\theta) H(\theta)^{-1} \nabla V (\theta) , \dot{\gamma}(0) \bigr\rangle
\\& = g \bigl( H(\theta)^{-1} \nabla V (\theta) , \dot{\gamma}(0)   \bigr),
\end{align*}
for any curve $\gamma: (-\varepsilon, \varepsilon) \rightarrow \M$ with $\gamma(0)=\theta$, from where it follows that $\nabla_{g_H} V (\theta) = H(\theta)^{-1} \nabla V (\theta)$, where we recall $\nabla$ denotes the usual gradient in $\R^d$. 

In this light, \eqref{eq:PreCondGradDescent} can be interpreted as a gradient descent algorithm, only that the gradient is taken 
w.r.t. a metric that is different from the standard Euclidean one.
% The previous discussion aimed at highlighting that preconditioned gradient descent \red can be written as
% \[ \dot{\theta}_t = - \nabla_{g} V(\theta_t), \]
% and thus it \textit{is} gradient descent, \nc only that taken w.r.t. a metric in the parameter space that is different from the standard Euclidean one. 
In section \ref{sec:Sampling}, where we discuss sampling, we will return to some of the insights that we have developed in this section. In particular, in order to define gradient descent dynamics of a functional over a manifold we need to specify
two ingredients: 1) an energy $V$ to optimize, and 2) a metric $g_\theta$ under which we  define the  gradient. For the last item, it will be convenient to have a clear understanding of how to represent smooth curves in the manifold of interest and characterize their velocities appropriately. Indeed, equation \eqref{eq:DefGradient} explicitly relates the metric $g_\theta$ of the manifold, the target energy $V$, the rate of change of the energy  along arbitrary smooth curves $\gamma$, and the gradient $\nabla_g V$ of the energy relative to the chosen metric.}

\subsubsection{Geodesic Convexity}
There are analogous conditions to the PL and strong convexity assumptions discussed in section \ref{sec:GradSystems} that guarantee the convergence of the flow \eqref{eq:PreCondGradDescent} toward global minima of $V$. First, write \eqref{eq:PreCondGradDescent} as an energy dissipation equality of the form
\begin{equation}
V(\theta_t) =  V(\theta_s) - \frac{1}{2}\int_{s}^t |\nabla_g V (\theta_r)|_{\theta_r}^2 dr  - \frac{1}{2} \int_{s}^t |\dot \theta_r|_{\theta_r}^2 dr, 
\label{eq:EDEPrecond}
\end{equation}
where we have used $|\cdot|_\theta^2 $ to denote $g_\theta(\cdot, \cdot)$. The equivalence between \eqref{eq:PreCondGradDescent} and \eqref{eq:EDEPrecond} follows from an identical argument as in Proposition \ref{prop:EDE} applied to an arbitrary inner product. The analogous PL condition in the preconditioned setting takes the form: 
\begin{equation}
    \alpha (V (\theta) - V^\ast) \le \frac12 | \nabla_g V(\theta) |_\theta^2,
    \label{eq:PLGeneral}
\end{equation}
which generalizes the PL condition in the Euclidean setting to general inner products and gradients.

To introduce an appropriate notion of convexity that allows us to generalize the results of
section \ref{sec:GradSystems} we need to introduce a few more ideas from Riemannian geometry. Given a Riemannian manifold $(\M,g)$, we define the geodesic distance $d_g$ induced by the metric $g$ as:
{
\begin{equation}
   d_g^2(\theta, \theta') = \inf_{t \in [0,1] \mapsto (\gamma_t, \dot{\gamma}_t) }  \int_{0}^1   |\dot{\gamma}_t|_{\gamma_t}^2 dt. 
   \label{eq:GeoDistance}
\end{equation}
We will say that $\gamma: [0,1] \rightarrow \M$ is a constant speed geodesic between $\theta$ and $\theta'$ if $\gamma$ is a minimizer of the right-hand side of the above expression. Equivalently, 
a constant speed geodesic between $\theta$ and $\theta'$ is any curve with $\gamma(0)=\theta $ 
and $\gamma(1)=\theta'$ such that $d_g \bigl(\gamma(s), \gamma(t)\bigr) = |t-s| d_g(\theta, \theta')$ for all $s, t \in [0,1]$. The advantage of the latter definition is that it is completely described in terms of the distance function $d_g$ and in particular does not require explicit mention of the Riemannian structure of the space. \nc
}

We can now define the notion of $\alpha$-\textit{geodesic convexity}. {We say  $V: \M \rightarrow \R$} is $\alpha$-geodesically convex if for all {$\theta, \theta' \in \M$ there exists a constant speed geodesic
$\gamma: [0,1] \rightarrow \M$}
between them, such that
{
\begin{equation}
   V \bigl(\gamma(t)\bigr) \leq t V(\theta) + (1-t) V(\theta') - \frac{\alpha}{2} t(1-t) d_g(\theta, \theta')^2,   
   \label{eq:GeoConv}
\end{equation}
}
%\[ F(\gamma(t)) \leq t F(\theta) + (1-t) F(\theta') - \frac{\alpha}{2} t(1-t) d_g(\theta, \theta')^2,   \]
for all $t \in [0,1]$.

%\begin{remark}
Notice that $\alpha$-geodesic convexity reduces to $\alpha$-strong convexity when $(\M, g)$ is an Euclidean space. Also, it can be shown that $\alpha$-geodesic convexity for $\alpha >0$ implies the generalized PL condition \eqref{eq:PLGeneral}  (see Lemma 11.28 in \cite{boumal2020introduction}), which in turn implies, following the proof of Proposition \ref{prop:ExpDecay}, exponential decay rates for the energy {$V$} along its gradient flow, in direct analogy with Proposition \ref{prop:ExpDecay}. 
%\end{remark}

\begin{remark}
\label{rem:InnerProds}
Equation \eqref{eq:GeoDistance} relates the distance function $d_g$ with the family of inner products $g$. This formula is very useful as it allows us to recover the metric $g$ from its distance function $d_g$. This observation will be relevant when discussing the formal Riemannian structures on spaces of probability measures in the context of sampling
in section \ref{sec:LangevinGradFlow}.
\end{remark}

\subsection{Time Discretizations}\label{sec:discODE}

\subsubsection{Standard Gradient Descent}
\label{sec:OptimTimeDiscrete}
In this section we discuss how to obtain practical optimization algorithms by discretizing the gradient system \eqref{eq:GradDescent} in time. First, the \textit{explicit Euler} scheme gives the standard gradient descent iteration
\begin{equation}\label{eq:ExplicitEuler}
   \theta_{n+1}= \theta_n - \tau \nabla V(\theta_n).
\end{equation}
In numerical analysis of differential equations, $\tau>0$ is interpreted as a small time-step; then, if  \eqref{eq:ExplicitEuler} and \eqref{eq:GradDescent} are initialized at the same point $\theta_0$, it holds that $\theta_n \approx \theta_t$ for $t = n \tau.$
In the optimization context of interest, $\tau$ is referred to as a learning rate and it is insightful to note that \eqref{eq:ExplicitEuler} can be defined variationally as
\[ \theta_{n+1}= \argmin_{\theta} \Bigl( \langle \nabla V(\theta_n), \theta- \theta_n \rangle + \frac{1}{2\tau } | \theta- \theta_n |^2 \Bigr).   \]
Thus, $\theta_{n+1}$ is found by minimizing $V(\theta_n) + \langle \nabla V(\theta_n), \theta- \theta_n \rangle + \frac{1}{2\tau } | \theta- \theta_n |^2$, noticing that the first two terms form the first order approximation of the objective $V$ around the most recent iterate $\theta_n.$ In practice, the learning rate may be chosen adaptively using a line search \cite{nocedal}. 
%This update rule should resembles \eqref{eq:ImplcitEuler1}, only that the function $V$ has been substituted with its first order approximation around the most recent update. 

Compared to the continuous-time setting, energy dissipation and convergence of the explicit Euler scheme require further assumptions on the function $V$. The following proposition is analogous to Proposition \ref{eq:ExpDecay} but relies on a \textit{smoothness} assumption on the gradient of $V$ additional to the PL condition. 
\begin{proposition}
\label{prop:DecayForDiscrete}
Suppose that $V$ has $L$-Lipschitz gradient, has minimum $V^*$, and satisfies the PL condition. Then the gradient descent algorithm defined by \eqref{eq:ExplicitEuler} with step-size $\tau := \frac{1}{L}$ has a linear convergence rate. More precisely, it holds that 
\begin{equation}
V(\theta_{n+1}) - V^* \le \Bigl(1 - \frac{\alpha}{L}\Bigr)^n \bigl(V(\theta_0) - V^*\bigr). 
\label{eq:DecayRatesTimeDiscrete}
\end{equation}
\end{proposition}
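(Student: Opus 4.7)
The plan is to combine two inequalities: the \emph{descent lemma} that follows from $L$-Lipschitz continuity of $\nabla V$, together with the PL condition \eqref{eq:PLCondition}. From these two ingredients one obtains a per-step contraction in $V(\theta_n) - V^\ast$, and iterating gives the claim.

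First, I would recall that if $\nabla V$ is $L$-Lipschitz, then for all $\theta, \theta' \in \R^d$,
\begin{equation*}
V(\theta') \le V(\theta) + \langle \nabla V(\theta), \theta' - \theta \rangle + \frac{L}{2} |\theta' - \theta|^2.
\end{equation*}
This is standard: it follows from writing $V(\theta') - V(\theta) = \int_0^1 \langle \nabla V(\theta + s(\theta' - \theta)), \theta' - \theta\rangle\, ds$ and using the Lipschitz bound on $\nabla V$ inside the integral. Applying it with $\theta = \theta_n$ and $\theta' = \theta_{n+1} = \theta_n - \tau \nabla V(\theta_n)$ yields
\begin{equation*}
V(\theta_{n+1}) \le V(\theta_n) - \tau\Bigl(1 - \frac{L\tau}{2}\Bigr)|\nabla V(\theta_n)|^2.
\end{equation*}
The choice $\tau = 1/L$ is designed precisely to maximize the coefficient of $|\nabla V(\theta_n)|^2$, giving $V(\theta_{n+1}) \le V(\theta_n) - \frac{1}{2L} |\nabla V(\theta_n)|^2$.

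Next, I would invoke the PL condition \eqref{eq:PLCondition} to replace $\tfrac{1}{2}|\nabla V(\theta_n)|^2$ by a lower bound proportional to $V(\theta_n) - V^\ast$, obtaining
\begin{equation*}
V(\theta_{n+1}) - V^\ast \le \Bigl(1 - \frac{\alpha}{L}\Bigr)\bigl(V(\theta_n) - V^\ast\bigr).
\end{equation*}
A trivial induction on $n$ then produces \eqref{eq:DecayRatesTimeDiscrete}. Note that $\alpha \le L$ automatically (since PL combined with $L$-smoothness forces this), so the contraction factor lies in $[0,1)$.

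There is no real obstacle here: the argument is essentially a discrete-time analog of the proof of Proposition~\ref{prop:ExpDecay}, with Gronwall's inequality replaced by iterating the one-step contraction, and with the $L$-smoothness assumption playing the role that the identity \eqref{eq:DissipationAux} played in continuous time. The one conceptual point worth emphasizing is why smoothness is needed at all in discrete time: it quantifies how much $V$ can increase when we take a finite step in the direction of $-\nabla V(\theta_n)$, and it is this control that forces the step-size restriction $\tau \le 1/L$ (beyond which the energy need not decrease).
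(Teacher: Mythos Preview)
Your proof is correct and follows essentially the same approach as the paper: apply the descent lemma from $L$-smoothness, substitute the explicit Euler update with $\tau = 1/L$ to obtain $V(\theta_{n+1}) \le V(\theta_n) - \tfrac{1}{2L}|\nabla V(\theta_n)|^2$, invoke the PL condition, and iterate. Your version is slightly more detailed (sketching why the descent lemma holds, displaying the intermediate inequality for general $\tau$, and noting that $\alpha \le L$), but the argument is the same.
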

\begin{proof}
A classical result in convex analysis ensures that the assumption that $\nabla V$ is $L$-Lipschitz implies
$$V(\theta_{n+1}) \le V(\theta_n) + \langle \nabla V(\theta_n), \theta_{n+1} - \theta_n \rangle + \frac{L}{2} |\theta_{n+1} - \theta_n |^2.  $$
Using \eqref{eq:ExplicitEuler}, we then deduce that
$$V(\theta_{n+1}) \le V(\theta_n) - \frac{1}{2L} | \nabla V(\theta_n) |^2,$$ which combined with the PL condition gives
\begin{align*}
V(\theta_{n+1}) - V^* &\le V(\theta_n) - V^* - \frac{\alpha}{L} \bigl( V(\theta_n) - V^*)   \\
& = \Bigl(1 - \frac{\alpha}{L} \Bigr) \bigl( V(\theta_n) - V^* \bigr). 
\end{align*}
The result follows by induction. 
\end{proof}

 From the proof of Proposition \ref{prop:DecayForDiscrete} we see that, under the $L$-smoothness condition and assuming that the step size $\tau$ is sufficiently small, the explicit Euler scheme dissipates the energy $V$. Moreover, this condition helps us quantify the amount of dissipation in one iteration of the scheme in terms of the norm of the gradient of $V$ at the current iterate. 

As an alternative discretization, one can consider the \textit{implicit Euler} scheme:
\begin{equation}
 \theta_{n+1}= \theta_n - \tau \nabla V(\theta_{n+1}),
   \label{eq:ImplcitEuler2}
\end{equation}
which coincides with the first order optimality conditions for 
\begin{equation}
   \theta_{n+1}= \argmin_{\theta} \Bigl( V(\theta) + \frac{1}{2\tau } | \theta- \theta_n |^2 \Bigr). 
   \label{eq:ImplcitEuler1}
\end{equation}
It follows directly from the definition of the implicit Euler scheme that it satisfies a dissipation inequality analogous to \eqref{eq:EDI} without imposing any additional smoothness conditions on $V$. However,  an important caveat is that \red determining $\theta_{n+1}$ from $\theta_n$ requires \nc solving  a new optimization problem
\eqref{eq:ImplcitEuler1} 
or finding a root for the (in general) non-linear equation \eqref{eq:ImplcitEuler2}. On the other hand, from a theoretical perspective the implicit Euler scheme, or \textit{minimizing movement scheme} as it is called in \cite{ambrosio2008gradient}, is an important tool for proving existence of gradient flow equations in general metric spaces; see Chapters 1-2 in \cite{ambrosio2008gradient}.

\subsubsection{Discretizations and Preconditioning}
One possible time discretization for \eqref{eq:PreCondGradDescent} is given by
\[ \theta_{n+1}= \theta_n - \tau  H(\theta_n)^{-1}\nabla V(\theta_n),  \]
which is a direct adaptation of \eqref{eq:ExplicitEuler} to the preconditioned setting. \red  Proposition \ref{prop:DecayForDiscrete} \nc can be readily adapted using the PL condition and $L$-smoothness of $V$ relative to the geometry induced by the field $H$.

\begin{remark}
\label{rem:FasterRatesVsDiscretization}
In line with the discussion at the end of section \ref{sec:Precondition}, we notice that the effect of scaling the field $H$ by a constant $1/r$ is to scale the {constants in both the PL condition and the $L$-smoothness condition by a factor of $r$. The net gain in \eqref{eq:DecayRatesTimeDiscrete} from rescaling the metric is thus null.  }  
\end{remark}
%In line with the discussion at the end of section \ref{sec:Precondition}, we notice that the effect of scaling the field $H$ by a constant $1/r$ (where $r\in(0,1)$) is to scale up the constant in the PL condition for a given $V$ by a factor of $r$ and scale down its corresponding constant in the $L$-smoothness condition by a factor of $1/r$. The net gain in \eqref{eq:DecayRatesTimeDiscrete} from rescaling the metric is thus null.    

Another \red possible time discretization for \eqref{eq:PreCondGradDescent} \nc when the function $H$ is the Hessian of a strictly convex function $h: \Theta \rightarrow \R$ (not necessarily equal to the objective function $V$) is the \textit{mirror descent} scheme:
\begin{equation}
    \begin{cases} z_{n+1} = z_n - \tau \nabla V(\theta_n),\\
\theta_{n+1}= (\nabla h)^{-1}(z_{n+1}).
\end{cases}
\label{eq:MirrorDescent}
\end{equation}
 The idea in mirror descent is to update an associated mirror variable (a transformation of $\theta$ by a mirror map, in this case $\nabla h $) using a gradient step, as opposed to directly updating the variable $\theta$ as in the standard explicit Euler scheme. Using a Taylor approximation of $(\nabla h)^{-1}$ around $z_t$ we see that
\begin{align*} 
\theta_{n+1} &= (\nabla h)^{-1} \bigl(z_n - \tau \nabla V (\theta_n) \bigr) 
\\ &\approx \theta_n - \tau H(\theta_n)^{-1} \nabla V (\theta_n), 
\end{align*}
revealing why mirror descent can be regarded as an approximation of \eqref{eq:PreCondGradDescent} when $H$ is the Hessian of $h$. 

It is worth remarking that the update rule \eqref{eq:MirrorDescent} has the following variational characterization 
\begin{equation}
  \theta_{n+1} := \argmin_{\theta \in \Theta} \Bigl( \langle \nabla V(\theta_n) , \theta  \rangle +\red  \frac{1}{\tau} \nc D_h(\theta \| \theta_n)\Bigr),    
  \label{eq:VariationalMirror}
\end{equation}
where the function $D_h(\theta \|\theta')$ has the form
\[ D_h(\theta \| \theta') := h(\theta) - h(\theta') - \langle \nabla h (\theta') , \theta -\theta' \rangle,   \]
and is often referred to as \textit{Bregman divergence}. This variational characterization was discovered and used in \cite{BECK2003167} to deduce convergence properties of mirror descent. Notice that the strict convexity of $h$ guarantees that the function $D_h(\cdot \|\cdot)$ is non-negative and zero only when both of its arguments coincide. Bregman divergences thus play a similar role to the one played by the quadratic function $\frac{1}{2} | \cdot - \cdot |^2$ in the variational form of the standard explicit Euler scheme.

% \red 
% \begin{remark}
% Although \eqref{eq:MirrorDescent} provides an explicit update rule for the optimization variable $\theta$, it is worth highlighting that in some situations the direct computation of the inverse of the gradient map $\nabla h$ may be more expensive than solving the optimization problem \eqref{eq:VariationalMirror}, which can be solved efficiently for some choices of $h$. In general, it is useful to have different formulations of the same update rule, as one can exploit, depending on the situation of interest, the different advantages of each formulation. 
% \end{remark}

% \red 
% \textbf{Note: discuss extra assumptions for the discretization schemes to work: add a reference for convergence of mirror descent}
% \nc
\begin{figure}
    \centering
    \includegraphics[width=0.9\linewidth]{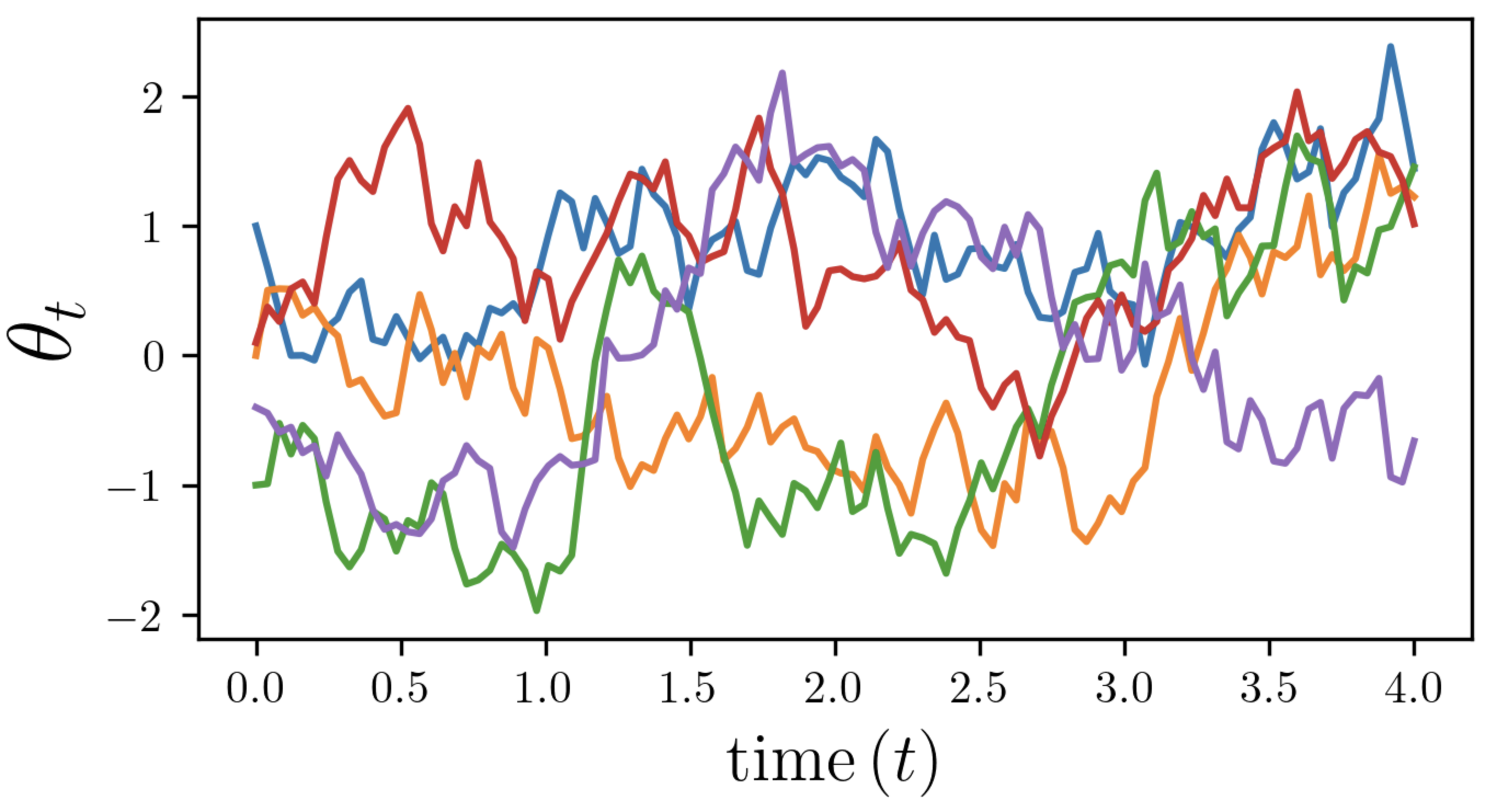}
   \vspace{-2ex} \caption{Five trajectories of Langevin dynamics with double well potential $V$ given by \eqref{eq:doublewell}.}
    \label{fig:langevintrajectories}
\end{figure}

\section{Sampling}
\label{sec:Sampling}
While direct sampling from certain distributions, e.g. Gaussians, may be rather straightforward, sampling from a general target distribution can be challenging, especially in high-dimensional settings. In this section we consider the problem of sampling a target density $\pi(\theta) \propto \exp\bigl( - V(\theta)\bigr).$ As guiding examples, one may consider sampling a posterior or Boltzmann distribution, see \eqref{V-definition}--\eqref{eq:boltzmann}. We start by introducing Langevin dynamics in section \ref{ssec:Langevin}, a stochastic differential equation that resembles the gradient system \eqref{eq:GradDescent}, but which incorporates a Brownian motion that makes the solution trajectories $\{\theta_t\}_{t \ge 0}$ random. In section \ref{ssec:convergenceLangevin} we present some results that state that, under suitable assumptions, the density $\rho_t$ of $\theta_t$ converges to the desired target density $\pi$ as $t\to \infty.$ In section \ref{ssec:langevinflows} we discuss how the Langevin dynamics define a gradient flow in the space of probability distributions. Finally, section \ref{ssec:discretizationLangevin} discusses how to use discretizations of Langevin dynamics to obtain practical sampling algorithms. Our presentation here parallels that of section \ref{sec:GradDescent}.

%In such cases, the only viable approach may be to reduce the problem of sampling the target to a sequence of easier ones. One popular way to do this is to discretize a continuous in time stochastic process whose invariant distribution is desired target measure.

\subsection{Langevin Dynamics}\label{ssec:Langevin}
Consider the \emph{overdamped Langevin} diffusion \cite{pavliotis2014stochastic} 
% is given by
\begin{equation}
   d\theta_t = - \nabla V(\theta_t) \, dt + \sqrt{2} \, dB_t,
   \label{eq:Langevin}
\end{equation}
where $\{ B_t \}_{t \geq 0}$ is a Brownian motion on $\Theta =\R^d$. Langevin dynamics can be interpreted as a stochastic version of the gradient descent dynamics \eqref{eq:GradDescent}. This is illustrated in the following example, which also provides intuition on the connection between Langevin dynamics and sampling.

\begin{figure}
    \centering
    \includegraphics[width =0.9\linewidth]{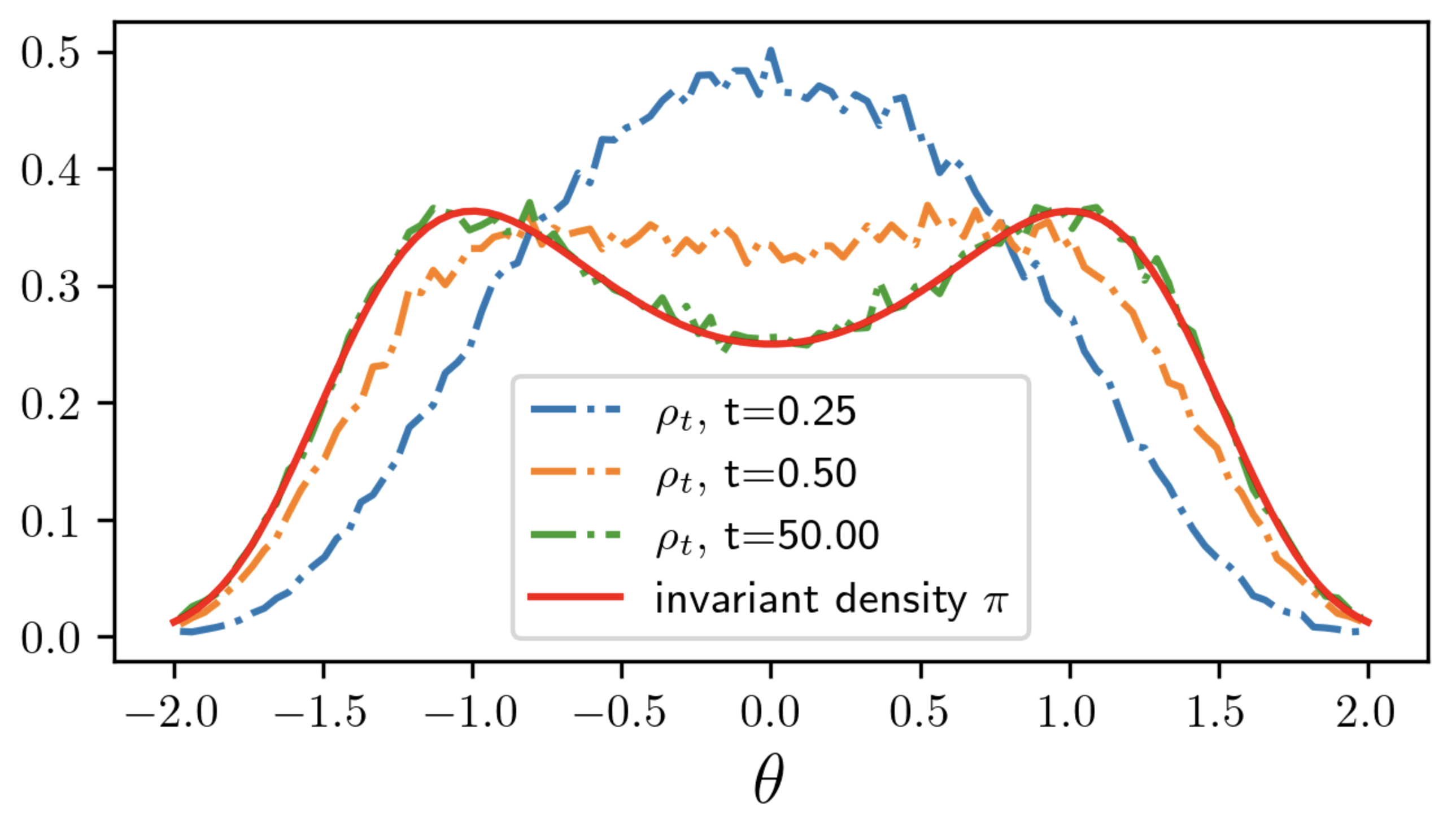}
     \vspace{-3ex} 
    \caption{Histograms of $\rho_t$ at $t = 0.25, 0.5, 50.$ For large $t,$ $\rho_t$ is close to the target density $\pi \propto e^{-V}.$}
    \label{fig:histograms}
\end{figure}

\begin{example}\label{ex:langevin}
 Consider Langevin dynamics with the double well potential $V$ introduced in \eqref{eq:doublewell}. Figure \ref{fig:langevintrajectories} shows five trajectories, initialized as in Figure \ref{fig:deterministic}. For each $t> 0,$ $\theta_t$ is now a random variable, whose Lebesgue density will be denoted by $\rho_t$ in what follows. Figure \ref{fig:histograms} shows an approximation of $\rho_t$ for $t \in\{ 0.25, 0.5,50\}$ obtained by simulating $N = 10^5$ solution trajectories. Notice that at $t= 50,$ $\rho_t$ is exceedingly close to the target density $\pi(\theta) \propto \exp\bigl( -V(\theta) \bigr),$ so that $\theta_t$ can be viewed as a sample from $\pi.$
 Thus, while $\theta_t$ is random due to  the Brownian motion, the density $\rho_t(\theta)$ is larger at points $\theta$ where  $V(\theta)$ is small. \end{example}
 %We can create a histogram of the different values of $\theta_t$ (for different values of $t$) that are obtained after initializing the scheme a number $N$ of times; we have illustrated our results in Figure \ref{fig:histograms}. The resulting histograms can be interpreted as approximations of the distributions of $\theta_t$ for different values of $t$. From now on we denote by $\rho_t$ the distribution of $\theta_t$ and refer to the collection $\{\rho_t\}_{t \geq 0}$ as the \textit{law} of the stochastic process. For the example illustrated in Figure \ref{fig:histograms}, our histograms suggest that as time $t$ goes to infinity the distributions $\rho_t$ tend to look more and more like the target $\pi(\theta) \propto e^{-V} d\theta$. 

\subsection{A Note on Convergence}\label{ssec:convergenceLangevin}
It is natural to ask whether the law $\{\rho_t \}_{t\geq 0}$ of a given stochastic process
 converges to an \emph{invariant} distribution $\pi.$ For the Langevin diffusion, the positive answer illustrated in Example \ref{ex:langevin} holds under suitable assumptions on $V$ that are analogous to the PL and strong convexity conditions in section \ref{sec:GradDescent}.
 %satisfies a certain differential equation (interpreted in a suitable sense) describing the way in which the measures $\rho_t$ evolve in time and how they behave as $t \rightarrow \infty$. 
 A natural way to study the long-time behavior of $\rho_t$ is to derive a differential equation for its evolution. To that end, one may characterize the time derivative of the action of $\rho_t$ on suitable test functions $\phi : \Theta \rightarrow \R$. More precisely, we compute $\frac{d}{dt} \int \phi(
\theta) d\rho_t (\theta)$, which is a standard derivative of a function from the real line to itself. For the Langevin diffusion \eqref{eq:Langevin} it can be proved that:
\begin{equation}
    \frac{d}{dt} \int_{\Theta} \phi(\theta) d \rho_t(\theta) = \red -\nc \int_{\Theta} \nabla \phi \cdot \nabla \bigl(V + \log(\rho_t)\bigr) d\rho_t(\theta),
    \label{eq:FokkerPlanckLangevinWeak}
\end{equation}
$ \forall t >0, \forall \phi \in C_c^\infty(\Theta).$ The above condition is the \textit{weak formulation} of the Fokker-Planck equation:
\begin{align}
   \partial_t \rho_t &=  \text{div} \bigl( \rho_t \nabla \bigl( V +  \log(\rho_t) \bigr) \bigr) =: \mathcal{L} \rho_t.
   \label{eq:FokkerPlanckLangevin}
\end{align}
From now on we  interpret \eqref{eq:FokkerPlanckLangevin} in its weak form \eqref{eq:FokkerPlanckLangevinWeak}. 

We observe that \red$\pi \propto e^{- V} $ \nc is a stationary point of the dynamics \eqref{eq:FokkerPlanckLangevin}. That is, if we initialize the dynamics at $\rho_0= \pi$, then $\rho_t:= \rho_0$ for all $t>0$ is a solution to the equation. \red The next result describes the long-time behavior of a solution to the Fokker-Planck equation when initialized at more general $\rho_0$.\nc 
%\red \textbf{[State a theorem about the long term behavior of the FP eqution under general assumptions on V, i.e. without relying on geodesic convexity w.r.t. Wasserstein and instead relying on the existence of a Poincare ineq.  ]}   \nc

\begin{theorem}
\label{thm:DissipationPoincare}
Let $\rho_t$ be the solution to the Fokker-Planck equation with $\rho_0 \in L^2(\pi^{-1}), $ where $L^2(\pi^{-1})$ is the $L^2$ space with the weight function $\pi^{-1}$. Suppose that there is $\alpha >0 $ such that the following \red Poincar\'e inequality holds: \nc for every $f \in C^1 \cap L^2(\pi)$ that has zero mean under $\pi,$ it holds that $\alpha \| f\|_{L^2(\pi)}^2 \le \| \nabla f \|_{L^2(\pi)}^2.$ Then it holds that
$$ \| \rho_t - \pi \|_{L^2(\pi^{-1})} \le e^{-\alpha t} \| \rho_0 - \pi\|_{L^2(\pi^{-1})}.$$
\end{theorem}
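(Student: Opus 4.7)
The plan is to reduce the bound to a Poincar\'e-type decay for the relative density $f_t := \rho_t / \pi$. A direct calculation gives
\begin{equation*}
\|\rho_t - \pi\|_{L^2(\pi^{-1})}^2 = \int (f_t - 1)^2 \, d\pi,
\end{equation*}
so the statement is equivalent to $\|f_t - 1\|_{L^2(\pi)} \le e^{-\alpha t}\|f_0 - 1\|_{L^2(\pi)}$. Mass conservation gives $\int f_t \, d\pi = \int \rho_t \, d\theta = 1$, so $f_t - 1$ has zero $\pi$-mean for every $t$, which is precisely what allows us to invoke the Poincar\'e inequality on it.

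Next I would differentiate $\|f_t - 1\|_{L^2(\pi)}^2 = \int f_t^2 \, d\pi - 1$ in time. Formally,
\begin{equation*}
\frac{d}{dt} \int f_t^2 \, d\pi = 2 \int f_t \, \partial_t \rho_t \, d\theta,
\end{equation*}
and using the weak form \eqref{eq:FokkerPlanckLangevinWeak} with $\phi = f_t$ gives
\begin{equation*}
\int f_t \, \partial_t \rho_t \, d\theta = - \int \nabla f_t \cdot \nabla\bigl(V + \log \rho_t\bigr) \, d\rho_t.
\end{equation*}
Because $\pi \propto e^{-V}$ we have $V + \log \rho_t = \log f_t + \text{const}$, so $\nabla(V + \log \rho_t) = \nabla f_t / f_t$; substituting $d\rho_t = f_t \, d\pi$ collapses the right-hand side to $-\int |\nabla f_t|^2 \, d\pi$. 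The upshot is the dissipation identity
\begin{equation*}
\frac{d}{dt} \|f_t - 1\|_{L^2(\pi)}^2 = - 2 \int |\nabla f_t|^2 \, d\pi,
\end{equation*}
strongly reminiscent of the energy dissipation relation \eqref{eq:EDE} from the optimization setting.

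To finish, I would apply the Poincar\'e inequality to the mean-zero function $f_t - 1$ (noting that $\nabla(f_t - 1) = \nabla f_t$), obtaining
\begin{equation*}
\frac{d}{dt} \|f_t - 1\|_{L^2(\pi)}^2 \le - 2\alpha \, \|f_t - 1\|_{L^2(\pi)}^2,
\end{equation*}
and Gr\"onwall's inequality followed by taking a square root delivers the claimed bound.

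The main obstacle is rigorously justifying the use of $\phi = f_t$ as a test function in \eqref{eq:FokkerPlanckLangevinWeak}, which is stated only for $\phi \in C_c^\infty(\Theta)$, together with the exchange of $d/dt$ and the integral. This requires approximating $f_t - 1$ by compactly supported smooth functions, establishing strict positivity of $\rho_t$ so that $f_t$ and $\log \rho_t$ are well defined, and checking enough weighted Sobolev regularity so that $\nabla f_t$ and the underlying integration by parts are legitimate. Standard parabolic regularization of the Fokker--Planck equation handles these points, but this is the only step of substance beyond the clean formal computation sketched above.
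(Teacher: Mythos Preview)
Your proof is correct and follows essentially the same route as the paper: introduce the relative density $f_t=\rho_t/\pi$ (the paper calls it $u_t$), derive the dissipation identity $\frac{d}{dt}\|f_t-1\|_{L^2(\pi)}^2=-2\|\nabla f_t\|_{L^2(\pi)}^2$, apply the Poincar\'e inequality to the mean-zero function $f_t-1$, and conclude by Gronwall. The only cosmetic difference is that the paper obtains the dissipation identity by writing down the PDE satisfied by $u_t$ and multiplying by $(u_t-1)\pi$, whereas you plug $\phi=f_t$ into the weak form \eqref{eq:FokkerPlanckLangevinWeak}; these are the same computation.
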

\begin{proof}
Define $u_t$ by $\rho_t = u_t \pi.$  We can verify that 
$$ \partial_t u_t = - \nabla V \cdot \nabla u_t + \dive( \nabla u_t), \quad u_0 = \rho_0 \pi^{-1}.$$
Therefore, the zero-mean function  $u_t -1$ satisfies 
\begin{equation}
    \frac{\partial (u_t - 1)}{\partial t} = \mathcal{L} (u_t-1).
\end{equation}
Multiplying by $(u_t-1) \pi,$ integrating, and using that by assumption
$\alpha \| u_t -1 \|_{L^2(\pi)}^2 \le \| \nabla u_t \|_{L^2(\pi)}^2,$ we deduce that 
\begin{equation}
    \frac{1}{2} \frac{d}{dt} \| u_t -1 \|_{L^2(\pi)}^2 \le - \alpha \| u_t - 1 \|_{L^2(\pi)}^2. 
\end{equation}
  Gronwall's inequality gives the desired result.
\end{proof}

The above result implies that, as time $t$ goes to infinity, the distribution $\rho_t$ converges toward the target density $\pi\propto e^{-V}$ \red exponentially fast. \nc The notion of convergence implied by Theorem \ref{thm:DissipationPoincare}, however, is not as strong as other notions such as the convergence in Kullback-Leibler (KL) divergence that will be discussed in the next section. In particular, Theorem \ref{thm:DissipationPoincare} should be contrasted with the discussion in section \ref{sec:GeoConvWass}.

\subsection{Choice of Objective and Metric}\label{ssec:langevinflows} 
Here we discuss a concrete variational interpretation for the Langevin system \eqref{eq:FokkerPlanckLangevin}. In essence, this entails viewing sampling as an optimization algorithm (in particular, as a gradient flow) that aims at recovering the target density $\pi$. As discussed toward the end of section \ref{sec:Precondition}, to realize this interpretation it is important to identify precisely the geometric objects involved in the definition of a gradient flow (energy, metric, etc.). In all subsequent sections, $\M$ will be the space of probability measures over $\Theta$, which we will denote with $ \mathcal{P}(\Theta)$, \red and $F$ will be the \textit{KL divergence} relative to $\pi$, which we recall is defined as: 
\begin{equation}\label{eq:MinGradFlow}
 \dkl(\nu \| \pi) :=  \int_{\Theta}  \log\Bigl( \frac{\nu(\theta)}{ \pi(\theta)}\Bigr) \nu(\theta)\, d\theta, \,   
 \end{equation} 
 whenever $\nu$ is absolutely continuous w.r.t. $\pi$ and $\dkl(\nu \| \pi) = \infty$ otherwise.
\nc Different choices of metric over $\mathcal{P}(\Theta)$ will induce different evolution equations (recall our discussion of preconditioning in the context of optimization over the parameter space $\Theta$), and thus different 
optimization schemes. In section \ref{sec:LangevinGradFlow} we discuss a specific geometric structure for $\mathcal{P}(\Theta)$ that realizes the Fokker-Planck equation of the Langevin diffusion as a gradient flow of $F$, and in section \ref{sec:otherGFs} we discuss other gradient flow structures that motivate other sampling algorithms.

% Consider the following optimization problem:
% \begin{equation}
%  \textrm{minimize} \,\, F(\nu) \quad \textrm{s.t.} \,\, \nu  \in \mathcal{P}(\Theta),
% %\min_{\nu \in \mathcal{P}(\Theta)} F(\nu) =: \dkl(\nu\|\pi),
% \label{eq:MinGradFlow}    
% \end{equation}
% where $F(\nu):=\dkl(\nu \|\pi)$ is the \emph{KL divergence} between $\nu$ and $\pi$ defined as
% \begin{equation*}\label{kldefinition}
%  \dkl(\nu \| \pi) :=  \int_{\Theta}  \log\Bigl( \frac{\nu(\theta)}{ \pi(\theta)}\Bigr) \nu(\theta)\, d\theta, \,   
%  \end{equation*} 
%  whenever $\nu$ is absolutely continuous w.r.t. $\pi$ and $\dkl(\nu \| \pi) = \infty$ otherwise. 
\begin{remark}
\label{rem:VariationalInference}
It holds that $\dkl(\nu\| \pi)\geq 0$ for all $\nu$, and equality holds if and only if $\nu=\pi$. In particular, the unique minimizer of $ \nu\in  \mathcal{P}(\Theta) \mapsto \dkl(\nu \| \pi)$ is the target $\pi$. Although trivial, this observation motivates  \textit{variational inference} --\red see 
\cite{lambert2022variational} and references therein\nc--
%is more related to the discussion in our paper }\nc)
a method for producing tractable proxies for $\pi$ that relies on the minimization of $\dkl(\cdot||\nu)$ over
 a user-chosen family of tractable distributions. 
%  solving an optimization problem of the form
% \begin{equation*}
%  \textrm{minimize} \,\, F(\nu) \quad \textrm{s.t.} \,\, \nu  \in \mathcal{G}.
% %\min_{\nu \in \mathcal{P}(\Theta)} F(\nu) =: \dkl(\nu\|\pi),
% \end{equation*}
\end{remark}

\begin{remark}
If $\pi$ has a density w.r.t. the Lebesgue measure that is proportional to $e^{-V}$, then $\dkl(\nu \| \pi)<\infty$ implies that $\nu$ is also absolutely continuous w.r.t. the Lebesgue measure. In that case we will abuse notation slightly and use $\nu$ to also denote $\nu$'s corresponding density. In particular, when we write $\log(\nu)$ it is understood that $\nu$ is interpreted as the density function w.r.t. Lebesgue measure of the measure $\nu$.
\end{remark}

% \dsa{ {\bf{Needs rethinking}}
% In the setting of Bayesian inference discussed in section \ref{sec:Bayesian}, where the target density $\pi$ is the posterior measure, a simple computation shows that $F$ can be rewritten, up to an additive constant, as
% \begin{equation}\label{def:j}
%  F({\nu}) = \dkl({\nu} \| {\pi_0}) + \int_{\Theta}\phi\bigl(\theta ; y \bigr)\, d {\nu} (\theta) , \quad {\nu} \in \mathcal{P}\bigl(\Theta),
% \end{equation}
% where, recall, ${\pi_0}$ is the prior measure and $\phi(\theta;y)$ is the negative log-likelihood. The decomposition of $F$ in equation \eqref{def:j} highlights two types of penalization acting over an arbitrary measure $\nu$: the first term forces measures to be close to the prior $\pi$ in a KL sense, and the second term favors measures $\nu$ that concentrate their mass at values of $\theta$ where the negative log-likelihood function relative to the observations $y$ is small. These terms can be interpreted as \textit{regularization} and \textit{data misfit} terms, respectively. The above decomposition of the energy is thus in line with the majority of statistical procedures and learning algorithms in the literature, where a similar balance between data misfit and regularization is enforced. 

\subsubsection{KL and Wasserstein}
\label{sec:LangevinGradFlow}

Following a series of seminal works that started with a paper by Jordan, Kinderlehrer, and Otto in the late 90's (see sections 8.1. and 8.2. in \cite{villani2003topics}) we will interpret equation \eqref{eq:FokkerPlanckLangevin} as the gradient flow of the energy $F$ w.r.t the \textit{Wasserstein} metric. 
Given $\rho, \rho' \in \mathcal{P}(\Theta)$ with finite second moments, their Wasserstein distance $W_2(\rho, \rho')$ is given by 
\begin{equation}
W^2_2(\rho, \rho'):= \min_{\Upsilon \in \Gamma(\rho, \rho') } \int_{\Theta \times \Theta} |\theta- \theta'|^2 d \Upsilon(\theta,\theta'),
\label{eq:Wasserstein}
\end{equation}
where $\Gamma(\rho, \rho')$ is the set of couplings between $\rho$ and $\rho'$, i.e. the set of Borel probability measures on the product space $\Theta \times \Theta$ with first and second marginals equal to $\rho$ and $\rho'$, respectively.

Formula \eqref{eq:Wasserstein}, although simple, does not reveal the infinitesimal geometric structure of the Wasserstein space to define gradients of functionals over $\mathcal{P}(\Theta)$. What is missing is a representation of the distance $W$ in a form similar to \eqref{eq:GeoDistance}. The next result by Benamou and Brenier (see section 8.1. in \cite{villani2003topics}) provides the missing elements.

% however, we must understand the underlying Riemannian structure that realizes the distance $W$. Indeed, notice that the above definition describes how to measure distances between two different points in $\mathcal{P}(\Theta)$, but not how to measure angles between tangent vectors at a given $\nu $ (i.e. it does not specify the family of inner products $g$). In order to gain the further geometric intuition that we will need to proceed our discussion, we return for a moment to the abstract setting of an arbitrary Riemannian manifold $(\M, g)$ and recall that the geodesic distance $d_g$ induced by the metric $g$ can be written as
% \[ d_g^2(x, \tilde x) = \inf_{t \in [0,1] \mapsto (\gamma(t), \dot{\gamma}_t)}  \int_{\M}   g_{\gamma(t)}( \dot{\gamma}_t, \dot{\gamma}_t ) dt   \]
% The above formula thus relates the distance function $d_g$ with the family of inner products $g$. We can now return to the settings of $\M= \mathcal{P}(\Theta)$ and try 

\begin{proposition}
Let $\rho, \rho' \in \mathcal{P}(\Theta)$. Then
\begin{align}\label{eq:DynamicOT}
W^2_2(\rho, \rho') =  &  \inf_{t \in [0,1] \mapsto (\gamma_t,  \nabla \varphi_t)} \int_0^1 \int _\Theta | \nabla \varphi_t (\theta)|^2  d\gamma_{t}(\theta)   dt \notag
\\& {\emph{s.t.}}\:  \partial_t \gamma_t + \emph{div} (\gamma_t \nabla \varphi_t ) = 0, 
\\& \gamma(0) = \rho, \gamma(1) = \rho'. \notag
\end{align}
\label{prop:BenamouBrenier}
\end{proposition}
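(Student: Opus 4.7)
I would prove the two inequalities separately. The direction $W_2^2(\rho,\rho') \leq \inf(\cdots)$ amounts to extracting, from any admissible path, a static coupling whose cost dominates the kinetic energy; the reverse direction requires exhibiting a specific admissible path whose kinetic energy equals $W_2^2(\rho, \rho')$.

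For $W_2^2(\rho,\rho') \leq \inf(\cdots)$: Take any admissible pair $(\gamma_t, \nabla\varphi_t)$ satisfying the continuity equation with boundary values $\rho, \rho'$. Consider the characteristic ODE
\[ \dot X_t(\theta) = \nabla\varphi_t\bigl(X_t(\theta)\bigr), \qquad X_0(\theta)=\theta. \]
Standard results for transport equations with sufficiently regular velocity fields give $(X_t)_\# \rho = \gamma_t$; in particular $\Upsilon := (X_0, X_1)_\#\rho$ is a coupling between $\rho$ and $\rho'$. Applying Cauchy-Schwarz in time inside the integral against $\rho$ yields
\begin{align*}
W_2^2(\rho,\rho') &\leq \int_\Theta |X_1(\theta)-\theta|^2 \, d\rho(\theta) \\
&= \int_\Theta \Bigl|\int_0^1 \nabla\varphi_t\bigl(X_t(\theta)\bigr)\, dt\Bigr|^2 d\rho(\theta) \\
&\leq \int_0^1\!\!\int_\Theta |\nabla\varphi_t\bigl(X_t(\theta)\bigr)|^2 d\rho(\theta)\, dt \\
&= \int_0^1\!\!\int_\Theta |\nabla\varphi_t(\theta)|^2 d\gamma_t(\theta)\, dt,
\end{align*}
where the last equality uses $(X_t)_\# \rho = \gamma_t$. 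Taking the infimum over admissible paths gives the inequality.

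For $W_2^2(\rho,\rho') \geq \inf(\cdots)$: I would construct the \emph{displacement interpolation} between $\rho$ and $\rho'$. Assuming $\rho$ is absolutely continuous (the general case follows by approximation), Brenier's theorem yields an optimal transport map $T = \nabla\phi$ for some convex potential $\phi$, such that $T_\# \rho = \rho'$ and $\int |T(\theta)-\theta|^2 d\rho(\theta) = W_2^2(\rho, \rho')$. Set $X_t := (1-t)\,\text{Id} + tT$, $\gamma_t := (X_t)_\# \rho$, and define the velocity $v_t$ by the relation $v_t\bigl(X_t(\theta)\bigr) = T(\theta)-\theta$. Since $X_t = \nabla\bigl[(1-t)\tfrac{|\theta|^2}{2} + t\phi(\theta)\bigr]$ is the gradient of a strictly convex function for $t\in[0,1)$, it is injective and one can verify that $v_t$ is itself a gradient $\nabla\varphi_t$ almost everywhere w.r.t.\ $\gamma_t$, and that $(\gamma_t, \nabla\varphi_t)$ satisfies the continuity equation. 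A change of variables then gives
\[ \int_0^1\!\!\int_\Theta |\nabla\varphi_t|^2 d\gamma_t\, dt = \int_0^1\!\!\int_\Theta |T(\theta)-\theta|^2 d\rho(\theta)\, dt = W_2^2(\rho,\rho'). \]

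The main obstacle is the second direction, and specifically the verification that the optimal velocity field is a gradient: this is where Brenier's theorem enters in an essential way, since a priori the Benamou-Brenier infimum restricts to velocities of gradient form. A secondary difficulty is handling the regularity of $T$ and of the characteristic flow (the DiPerna-Lions theory of transport equations with Sobolev vector fields is the usual tool), together with approximation arguments needed when $\rho$ fails to be absolutely continuous or lacks finite second moments.
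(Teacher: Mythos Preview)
Your sketch is essentially the standard proof of the Benamou--Brenier formula and is correct in outline: the upper bound via characteristics and Jensen's inequality, and the lower bound via displacement interpolation along the Brenier map, are exactly the two ingredients one needs. The identification of the optimal velocity as a gradient is indeed the delicate point, and you flag it appropriately.

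However, note that the paper does \emph{not} actually prove this proposition. It is stated as a known result attributed to Benamou and Brenier, with a reference to section~8.1 of Villani's book, and then the paper proceeds directly to use it. So there is nothing to compare your argument against: you have supplied a proof where the authors chose to cite one. Your approach is the standard one found in the cited reference.
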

The infimum in \eqref{eq:DynamicOT} is taken over all maps $t \in [0,1] \rightarrow (\gamma_t, \nabla \varphi_t)$, where each pair $(\gamma_t, \nabla \varphi_t)$ consists of a measure $\gamma_t \in \mathcal{P}(\Theta)$ and a vector field of the form $\nabla \varphi_t$ for a smooth $\varphi_t:\Theta \rightarrow \R$, that satisfy the \textit{continuity equation}:
\begin{equation}
\partial_t \gamma_t + \textrm{div} (\gamma_t \nabla \varphi_t ) = 0,  
\label{eq:ContEqu}
\end{equation}
interpreted in weak form.
%as
%\begin{equation*}
 %   \frac{d}{dt} \int_{\Theta} \psi(\theta) d \gamma_t(\theta) = \int_{\Theta} \nabla \psi \cdot \nabla \varphi_t d\gamma_t(\theta),
%\end{equation*}
%for all $\psi \in C_c^\infty(\Theta);$ 
Notice that the Fokker-Planck equation \eqref{eq:FokkerPlanckLangevin} is a particular case of the continuity equation with $\varphi_t= V + \log(\rho_t)$. Inspired by equation \eqref{eq:GeoDistance} (see also Remark \ref{rem:InnerProds}) we can provide a geometric interpretation of identity \eqref{eq:DynamicOT}: the continuity equation \eqref{eq:ContEqu} provides a representation of curves in the formal manifold $\M = \mathcal{P}(\Theta)$. In this representation, the velocity of a curve (tangent vector) at each point in the curve can be identified with a vector field (over $\Theta$) of the form $\nabla \varphi$. Furthermore, \eqref{eq:DynamicOT} motivates introducing an inner product at each $\nu \in \mathcal{P}(\Theta)$ of the form: 
\[ g_\nu (\nabla \varphi , \nabla \varphi'):= \int_{\Theta} \nabla \varphi(\theta) \cdot \nabla \varphi' (\theta) \, d\nu(\theta). \]

% In summary, objects of the form $\nabla \varphi$ play the role of tangent vectors in the manifold $\M$. At each $\nu\in \M$, an inner product $g_\nu$ is defined as above. Finally, we can represent curves in $\M$ via the continuity equation. 

With the above geometric interpretation in place, we may follow equation \eqref{eq:DefGradient} and identify the gradient of $F(\cdot)= \dkl(\cdot||\pi)$ at an arbitrary point $\nu$. For this purpose take an arbitrary solution $(\gamma_t, \nabla \varphi_t)$ to the continuity equation \eqref{eq:ContEqu} (i.e. take an arbitrary curve in $\M$) for which $F(\gamma_t)<\infty$ and compute: 
 \begin{align*}
   \frac{d}{dt} F(\gamma_t) &= \frac{d}{dt} \int_{\Theta} \log \left( \frac{\gamma_t}{e^{-V}} \right) d\gamma_t(\theta)
   \\& = \int_{\Theta} \nabla \varphi_t \cdot \nabla \bigl( V + \log(\gamma_t) \bigr) \,  d \gamma_t(\theta)
   \\& = g_{\gamma_t} \bigl( \nabla \varphi_t , \nabla ( V + \log(\gamma_t) ) \bigr).
 \end{align*}
 In the above we have gone from the first line to the second one using the weak form of the continuity equation; \red to go from the second to third line we have used the definition of $g_{\gamma_t}$. \nc
 %and the fact that $\int_{\Theta} \frac{d}{dt} \log(\gamma_t) d \gamma_t(\theta)  = \int_{\Theta} \frac{d}{dt} \gamma_t(\theta) d \theta  =0,   $  since $\gamma_t$ has total mass equal to one for all $t$.

From the above computation we conclude that the gradient of $F$ (w.r.t. to the Wasserstein metric) at a point $\nu$ for which $F(\nu)<\infty$ takes the form $\nabla \bigl( V + \log(\nu) \bigr)$. In particular, the curve in $\M = \mathcal{P}(\Theta)$ whose velocity vector agrees with the negative gradient of the functional $F$ takes the form of the Fokker-Planck equation \eqref{eq:FokkerPlanckLangevin}. In other words,  \eqref{eq:FokkerPlanckLangevin} can be interpreted as the gradient flow of $F(\cdot)= \dkl(\cdot\| \pi ) $ w.r.t. the Wasserstein metric.

\begin{remark}
To some extent, the computations in this section have been formal and some of the above derivations have been left unjustified. These computations rely on a formal adaptation of formula \eqref{eq:GradDescent} to the setting of the Riemannain manifold $\mathcal{P}(\Theta)$ endowed with the Wasserstein distance. For a rigorous treatment of the topics discussed in this section the reader is referred to the second part of the book \cite{ambrosio2008gradient}.  There, the notion of gradient flow in $\mathcal{P}(\Theta)$ is motivated by the dissipation identity \eqref{eq:EDE} and adapted to the metric space $(\mathcal{P}(\Theta), W_2)$.
\end{remark}

\begin{remark}
At a high level, the ideas discussed in this section can be used to propose flows aimed at solving variational inference problems like the ones briefly mentioned in Remark \ref{rem:VariationalInference}. Indeed, following the geometric intuition from projected gradient descent methods, where one uses the projection of the negative gradient of the objective onto the tangent plane of the constrained set to define the projected gradient descent flow, one may consider the projection of the negative gradient of the energy $F$ (w.r.t Wasserstein) onto the tangent planes of the submanifold $\mathcal{G}$; naturally, in this setting the notion of (orthogonal) projection is taken w.r.t. the Riemannain metric underlying the Wasserstein space. This idea has been recently explored in \cite{lambert2022variational} for certain families $\mathcal{G}$ of tractable distributions. 
\end{remark}

\subsubsection{Geodesic Convexity of the Relative Entropy in the Wasserstein Space}
 \label{sec:GeoConvWass}
 
The definition of geodesic convexity introduced in \eqref{eq:GeoConv} can be readily adapted to the setting of an energy defined over an arbitrary metric space, and in particular to the setting of $\mathcal{P}(\theta)$ endowed with the Wasserstein distance. Indeed, notice that equation \eqref{eq:GeoConv} is completely determined by the energy of interest, the distance function, and the notion of constant speed geodesic, which in turn can be defined in terms of the distance function. 

We have the following theorem by McCann relating the convexity of the function $V$ with the $\alpha$-geodesic convexity of $\dkl(\cdot \| \pi)$ when $\pi \propto e^{-V}$; see Theorem 5.15. in \cite{villani2003topics}.
 \begin{theorem}
 Suppose that $V$ is $\alpha$-strongly convex and let $\pi \propto e^{-V}$. Then  $\dkl(\cdot \| \pi)$ is $\alpha$-geodesically convex w.r.t. the Wasserstein distance.
\label{thm:GeodesicConvexityRelEntropy}
 \end{theorem}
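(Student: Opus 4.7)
My plan is to construct an explicit constant-speed geodesic via McCann's displacement interpolation and then verify inequality \eqref{eq:GeoConv} termwise, after splitting $\dkl(\cdot\| \pi)$ into a potential part and an entropy part. By Brenier's theorem, given $\rho_0, \rho_1 \in \mathcal{P}(\Theta)$ with finite second moments, there exists a convex $\psi : \R^d \to \R$ whose gradient $T := \nabla \psi$ pushes $\rho_0$ forward to $\rho_1$ and achieves the minimum in \eqref{eq:Wasserstein}, so that $W_2^2(\rho_0, \rho_1) = \int |T(\theta) - \theta|^2 \, d\rho_0(\theta)$. I would then take the displacement interpolation
\begin{equation*}
\rho_t := \bigl((1-t)\,\mathrm{Id} + t\, T\bigr)_\# \rho_0, \qquad t \in [0,1],
\end{equation*}
which is a constant-speed geodesic between $\rho_0$ and $\rho_1$ in $(\mathcal{P}(\Theta), W_2)$.

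Next, I would write $\dkl(\nu\|\pi) = \int V \, d\nu + \int \nu \log \nu \, d\theta + \log Z$ and treat each term along $\rho_t$. For the potential term $\mathcal{V}(\nu) := \int V \, d\nu$, pushing back to $\rho_0$ gives $\mathcal{V}(\rho_t) = \int V\bigl((1-t)\theta + tT(\theta)\bigr) \, d\rho_0(\theta)$. The $\alpha$-strong convexity of $V$ yields the pointwise inequality
\begin{equation*}
V\bigl((1-t)\theta + tT(\theta)\bigr) \le (1-t)V(\theta) + tV(T(\theta)) - \tfrac{\alpha}{2} t(1-t) |T(\theta) - \theta|^2,
\end{equation*}
and integrating against $\rho_0$, together with optimality of $T$, gives exactly $\alpha$-geodesic convexity of $\mathcal{V}$ along $\rho_t$.

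For the Boltzmann entropy $\mathcal{H}(\nu) := \int \nu \log \nu \, d\theta$, I would invoke McCann's displacement convexity theorem to conclude that $\mathcal{H}(\rho_t) \le (1-t)\mathcal{H}(\rho_0) + t\mathcal{H}(\rho_1)$. A self-contained argument uses the change-of-variables formula $\rho_0(\theta) = \rho_t\bigl((1-t)\theta + t T(\theta)\bigr) \det\bigl((1-t)I + t \nabla T(\theta)\bigr)$, valid a.e.\ since $\nabla T = \nabla^2 \psi$ is symmetric and positive semidefinite a.e., combined with the concavity of $A \mapsto \det(A)^{1/d}$ on symmetric positive semidefinite matrices. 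Adding this $0$-convexity bound to the $\alpha$-convexity of $\mathcal{V}$ (and noting that $\log Z$ is a constant which drops out) produces precisely the bound \eqref{eq:GeoConv} with constant $\alpha$.

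The routine piece is the potential-energy inequality, which is an immediate pushforward of strong convexity of $V$. The main obstacle is the displacement convexity of $\mathcal{H}$: it relies on Brenier's theorem to guarantee that the optimal map is a gradient of a convex potential (so that its Jacobian is positive semidefinite), and on the nontrivial concavity lemma for $\det^{1/d}$ on symmetric positive semidefinite matrices. Once these two Riemannian-geometric facts about $(\mathcal{P}(\Theta), W_2)$ are in hand, the theorem reduces to adding two convexity estimates along the same displacement interpolation.
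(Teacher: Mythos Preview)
Your argument is correct. The paper does not actually prove this theorem: it simply attributes the result to McCann and cites Theorem~5.15 in \cite{villani2003topics}, then moves on. What you have written is precisely the standard proof underlying that citation---Brenier map, displacement interpolation, the decomposition $\dkl(\nu\|\pi)=\int V\,d\nu+\int\nu\log\nu\,d\theta+\log Z$, pointwise $\alpha$-strong convexity for the potential term, and McCann's displacement convexity (via the $\det^{1/d}$ concavity) for the Boltzmann entropy. So you have supplied the content the paper outsources, by the same route the reference uses.

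One small technical point: Brenier's theorem, as you invoke it, requires the source measure $\rho_0$ to be absolutely continuous w.r.t.\ Lebesgue measure, not merely to have finite second moment. You should state this hypothesis explicitly; it is harmless in the present context since $\dkl(\rho_0\|\pi)<\infty$ already forces $\rho_0\ll\pi\ll\mathrm{Leb}$, but as written your opening sentence overstates what Brenier gives.
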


With Theorem \ref{thm:GeodesicConvexityRelEntropy} in hand, the $\alpha$-strong convexity of $V$ implies that
\[ \dkl(\rho_t\| \pi) \leq e^{-2 \alpha t} \dkl(\rho_0\| \pi) , \quad \forall t \geq 0,  \]
along a solution $\{ \rho_t \}_{t \geq 0}$ of the Fokker-Planck equation \eqref{eq:FokkerPlanckLangevin}. \red This notion of exponential contraction to equilibrium is not implied by the Poincar\'e condition from Theorem \ref{thm:DissipationPoincare}. \nc
% This is a stronger notion of exponential convergence to equilibrium 
% than the one stated in Theorem \ref{thm:DissipationPoincare},
% \bh{not only that the rate is faster but it also implies that $\rho_t$ 
% remains absolutely continuous w.r.t. $\pi$,}

\nc

%\subsubsection{Other Flow Structures for Sampling}
%\label{sec:MoreGeneral}
%- Mirror: \cite{chewi2020exponential}

\subsection{Time Discretizations}\label{ssec:discretizationLangevin}
This section describes how to obtain sampling algorithms from time discretization of the Langevin dynamics \eqref{eq:Langevin}.
In analogy to the explicit Euler scheme \eqref{eq:ExplicitEuler} for \eqref{eq:GradDescent}, the Euler-Maruyama discretization for \eqref{eq:Langevin} is given by $\theta_0 \sim \rho_0,$ and
\begin{equation}\label{eq:EM}
   \theta_{n+1}= \theta_n - \tau \nabla V(\theta_n) + \sqrt{2 \tau} \xi_n, \quad \xi_n \overset{\text{i.i.d.}}{\sim} N(0,1).
\end{equation}
For $t = n\tau,$ the law $\rho_n$ of $\theta_n$ approximates the law $\rho_t$ of $\theta_t$ given by \eqref{eq:Langevin}. However, the error introduced by time discretization causes $\rho_n$ to not converge, in general, to $\pi$ as $n \to \infty.$ In other words, the probability kernel 
\bh{$q(\theta_n, \cdot) = \text{law}(\theta_{n+1}| \theta_n)$}, defined by the Markov chain \eqref{eq:EM}, does not leave $\pi$ invariant. 

To remedy this issue, one may consider using \eqref{eq:EM} as a \emph{proposal kernel} within a Metropolis-Hastings algorithm \cite{liu2001monte}, leading to the \emph{Metropolis Adjusted Langeving Algorithm}, often referred to as MALA. The basic idea is to use an accept/reject mechanism to turn the proposal kernel $q$ into a new Markov kernel that leaves $\pi$ invariant. 
Given the current state $\theta_n,$ one proposes a move $\theta_n \mapsto \theta_{n+1}^*$ by sampling $q(\theta_n, \cdot);$ the move is accepted with a probability
\begin{equation}\label{eq:acceptance}
    a = \min \biggl(1, \frac{\pi(\theta_{n+1}^*)}{\pi(\theta_n)} \frac{q(\theta_{n+1}^*,\theta_n)}{q(\theta_n,\theta_{n+1}^*)} \biggr).
\end{equation}
If the move is accepted, we set $\theta_{n+1} :=\theta_{n+1}^*.$ Otherwise, we set $\theta_{n+1} := \theta_n.$ The Metropolis-Hastings acceptance probability \eqref{eq:acceptance} is chosen in such a way that $\pi$ is the invariant distribution of the new chain $\{\theta_n\}_{n=1}^\infty.$ Notice that the two steps of the algorithm, namely, sampling from the proposal kernel defined by \eqref{eq:EM} and evaluating \eqref{eq:acceptance}, can be implemented without knowledge of the normalizing constant of $\pi.$ 
We refer to 
\cites{roberts1996exponential}
for further details on the convergence of Langevin diffusions and their discretizations and we refer to \cite{chewi2020exponential} for a sampling analog of the mirror descent optimization algorithm in \eqref{eq:MirrorDescent}.

\section{Modern Twists on Langevin}\label{sec:otherGFs}
In this section we outline \red recent \nc extensions of the gradient flows of section~\ref{sec:LangevinGradFlow} aimed at sampling.  The idea is to employ gradient flows 
of the energy $F = \dkl(\cdot \| \pi)$  w.r.t. metrics beyond the Wasserstein distance. One hopes that the new dynamics lead to faster convergence to minimizers and, in turn, to more efficient sampling algorithms.

\subsection{Ensemble Preconditioning}
In analogy with section~\ref{sec:Precondition}, we consider 
preconditioned variants of the Langevin diffusion \eqref{eq:Langevin},  
\begin{equation}\label{eq:preconditioned-Langevin1}
    d \theta_t = -  H(\theta_t)^{-1} \nabla V(\theta_t) \, dt + \sqrt{2 H(\theta_t)^{-1} } 
    \, dB_t,
\end{equation}
where $H(\theta)$ is once again the preconditioning matrix field. The intuition from 
subsection~\ref{sec:Precondition} carries over in this setting: by choosing an 
appropriate pre-conditioner we can speedup the convergence of the Langevin diffusion 
to the target density. However, in contrast to \eqref{eq:Langevin}, choosing $H$ as a function of $\theta_t$ can in general lead to a nonlinear evolution for the law of the process. \red Moreover, without additional structure, the target $\pi\propto e^{-V}$ may not be an invariant distribution for \eqref{eq:preconditioned-Langevin1}. This motivates discussing suitable choices for $H$.   \nc 
%diffusion. \red \textbf{NGT: when we say nonlinear diffusion do we mean in $\theta_t$? But wouldn't this also be true for regular Langevin if $V$ is not quadratic? Perhaps what we mean is: nonlinear evolution equation for the law of the process?} \nc

One approach to constructing the matrix $H$ is to consider an ensemble of particles 
evolving \red according to \eqref{eq:preconditioned-Langevin1} and use \nc the location of 
the particles to construct an appropriate preconditioner. 
Following  \cite{garbuno2020interacting}, consider an ensemble 
of $J \ge 1$ interacting Langevin
diffusions $ \theta_t := \{ \theta_t^{(j)} \}_{j=1}^J$ that are evolved according to the 
coupled system  
\begin{equation*}
    d \theta^{(j)}_t = - H( \theta_t)^{-1} \nabla V(\theta_t^{(j)}) + \sqrt{2 H(\theta_t)^{-1}} \, d B_t^{(j)},
\end{equation*}
where $\{ B_t^{(j)} \}_{j=1}^J$ are i.i.d. Brownian motions, and 
\begin{equation*}
    H(\theta_t) = \frac{1}{J} \sum_{j=1}^J \left( \theta_t^{(j)} - \bar{\theta}_t \right)
     \left( \theta_t^{(j)} - \bar{\theta}_t \right)^\top,
\end{equation*}
for $\bar{\theta}_t$ the ensemble mean of the $\theta_t^{(j)}$. In other words, the 
precoditioner is chosen to be the empirical covariance matrix of the  ensemble at each point 
in time, a quantity that is convenient to compute in practice. 

% The ensemble 
% formulation can be further modified for certain choices of $V$ in order to 
% circumvent the calculation of $\nabla V$, resulting in a completely derivative 
% free algorithm \cite{garbuno2020interacting}.

Similarly to previous sections, one can show that the
ensemble preconditioned Langevin dynamics has a gradient flow structure. Taking the mean-field limit, i.e., letting $J \to \infty$, we formally obtain 
the following diffusion for the evolution of the ensemble 
\begin{equation*}
    d \theta_t = - H(\rho_t)^{-1} \nabla V(\theta_t) \, dt + \sqrt{2 H(\rho_t)^{-1}} \, dB_t, 
\end{equation*}
where $\rho_t$ denotes the distribution of $\theta_t$ as before and $H(\rho_t)$
is the covariance matrix of $\rho_t$.  The Fokker-Planck equation for $\rho_t$ is given by
\begin{equation}
    \partial_t \rho_t = {\rm div} \Bigl( \rho_t H(\rho_t)^{-1} \big( \nabla V(\theta_t) + \nabla \log \rho_t \big) \Bigr),
    \label{eqn:WeightedFP}
\end{equation}
\red which has the target $\pi\propto e^{-V}$ as a stationary point. \nc The divergence form in \eqref{eqn:WeightedFP} suggests a gradient flow structure as shown 
in \cite{garbuno2020interacting}. Indeed, the above equation is a gradient flow of $F$ w.r.t. the \emph{Kalman-Wasserstein} distance $W_K$ defined as 
\begin{equation*}
\begin{aligned}
       W_{K}^2( \rho, \rho') & := \inf_{\gamma_t, \varphi_t} \int_0^1 \int \langle \nabla \varphi_t, 
    H(\rho_t) \nabla \varphi_t \rangle \rho_t \:  d \theta \: dt, \\
    & \text{s.t.} \:  \partial_t \gamma_t + {\rm div} ( \gamma_t H(\gamma_t) \nabla \varphi_t) = 0, \\
    &  \gamma(0) = \rho, \quad \gamma(1) = \rho'. 
\end{aligned}
\end{equation*}

\subsection{Langevin with Birth-Death}
\label{sec:LangevinBirthDeath}
One of the shortcomings of the Langevin dynamics is that its convergence 
suffers when sampling from multi-modal target densities: the process may get stuck around one of the modes and it may take a long time to cross the energy barrier between various modes or to overcome entropic bottlenecks. To ameliorate this metastable behavior
\cite{lu2019accelerating} proposes to consider the 
{\it birth-death accelerated Langevin (BDL)} dynamics
\begin{align}\label{eq:BDL}
\begin{split}
    \partial_t \rho_t &=  \mathcal{L} \rho_t  + \rho_t( \log \pi - \log \rho_t) \\&- \rho_t \int_\theta (\log \pi - \log \rho_t) d \rho_t(\theta),
\end{split}    
\end{align}
where $\mathcal{L}$ is as in \eqref{eq:Langevin}. Notice that compared with the Fokker-Planck equation \eqref{eq:FokkerPlanckLangevin}, equation \eqref{eq:BDL} contains two additional terms. The second term on the right-hand side 
favors increasing (resp. decreasing) $\rho_t(\theta)$ whenever $\rho_t(\theta) < \pi(\theta)$
(resp. $\rho_t(\theta) > \pi(\theta)$), hence the name birth-death. The third term is only included to ensure that $\rho_t$ remains a probability distribution through the birth-death process. The birth-death terms make the equation non-local (due to averaging) and allow the dynamics to explore the 
support of a multi-modal $\pi$ more efficiently: if the dynamics get stuck in one mode, 
the birth-death process can still transfer some mass to another mode that has not yet been thoroughly explored. 

It turns out that, akin to the Langevin dynamics, BDL is also a gradient flow of the KL
divergence but w.r.t. a modification of the Wasserstein distance
referred to as the {\it Wasserstein-Fisher-Rao (or the Spherical Hellinger-Kantorovich)} 
distance:
\begin{subequations}\label{eq:WFR}
\begin{align*} %\tag{\ref{eq:WFR}}
& W_{\mbox {\tiny{\rm FR}}}^2  ( \rho, \rho') \\
& \hspace{2ex} = \inf_{\gamma_t, \varphi_t} \int_0^1 \left( \int |\nabla \varphi_t |^2 +  |\varphi_t|^2  d \gamma_t - \Bigl( \int \varphi_t d \gamma_t  \Bigr)^2 \right) dt
\end{align*}
subject to the constraints 
\begin{align}
     & \partial_t \gamma_t + \textrm{div} (\gamma_t \nabla \varphi_t ) = -  \gamma_t  \Bigl( \varphi_t -  \int \varphi_t d \gamma_t   \Bigr), \label{eq:WFR-1} \\
     & \gamma(0) = \rho, \quad \gamma(1) = \rho'. \label{eq:WFR-2}
\end{align}
\end{subequations}
As before, the BDL continuity equation \eqref{eq:WFR-1} is understood in the weak sense and 
plays the same role as the continuity equation in the Benamou-Brenier formulation 
 \eqref{eq:DynamicOT}.
% Equation
% \begin{equation}
%      \partial_t \gamma_t + \textrm{div} (\gamma_t \nabla \varphi_t ) = - \red \gamma_t \nc \Bigl( \varphi_t -  \int \varphi_t \gamma_t   \Bigr).
%      \label{eq:ContEqWFR}
% \end{equation}
% will be interpreted in weak form:
% \begin{align*}
%     \frac{d}{dt} \int_{\Theta} \psi(\theta) d \gamma_t(\theta) &= \int_{\Theta} \nabla \psi \cdot \nabla \varphi_t d\gamma_t(\theta) 
%     \\&- \int _{\Theta }  \psi \left( \varphi_t - \int_{\Theta} \varphi_t(\theta') d\gamma_t(\theta')  \right) d \gamma_t(\theta) ,
% \end{align*}
% for all $\psi \in C_c^\infty(\Theta).$ 
% Notice that in the definition of $W_{\mbox {\tiny{\rm FR}}}$, equation \eqref{eq:ContEqWFR} plays the same role played by the continuity equation in the dynamic formulation of the Wasserstein distance \eqref{eq:DynamicOT}.  
Equation \eqref{eq:WFR-1} thus provides an alternative representation for admissible curves in the space $\mathcal{P}(\Theta)$. Motivated by this, we will thus think  of a pair $(\varphi, \nabla \varphi)$ as tangent to a given point $\nu \in \mathcal{P}(\Theta)$ and introduce the inner product at the point $\nu$
 \begin{align*}
 g_\nu  \bigl( (\varphi, \nabla \varphi)& , (\varphi', \nabla \varphi') \bigr) :=  \int \nabla \varphi \cdot  \nabla \varphi ' d \nu 
 \\&+ \int \varphi \varphi' d \nu - \int \varphi  d \nu \int \varphi'  d \nu .
 \end{align*}
 
% We can then verify directly that 
% \begin{equation*}
%     W_{\rm FR}(\rho, \rho') = \inf_{\gamma_t} \int_0^1 g_{\gamma_t}(\dot \gamma_t, \dot \gamma_t) d t
% \end{equation*}
%   subject to \eqref{eq:WFR-1} and \eqref{eq:WFR-2}. 

   With this geometric structure for the space $\mathcal{P}(\Theta)$ endowed with the WFR metric, we can proceed to carry out an analogous computation to the one at the end of section \ref{sec:LangevinGradFlow}  and show that the BDL dynamics is the gradient flow of the energy functional  $\dkl( \cdot \| \pi)$ w.r.t.  the $W_{\rm FR}$
  geometry; see \cite{lu2019accelerating}*{Thms.~3.2 and 3.3}. Furthermore, the rate of convergence of BDL is at least as good as that of Langevin dynamics
   and is asymptotically independent of the negative-log-density $V$, making it suitable for exploration of multi-modal landscapes.
 However, an important caveat is that turning BDL into a sampling algorithm requires utilizing an ensemble of interacting Langevin trajectories to empirically approximate the mean-field dynamics \eqref{eq:BDL} using a kernel density estimation, which may be forbiddingly expensive in high dimensional settings.  %See details in \cite{lu2019accelerating}.  

%\begin{figure}
%    \centering
%    \includegraphics[width = \linewidth]{}
 %   \vspace{-.8cm} \caption{Trajectories of the Langevin SDE with $V(\theta) = \theta^4/4 - \theta^2/2.$ The trajectory in orange has inverse temperature ... while the trajectory in blue has inverse temperature ...}
 %   \label{fig:trajectories}
%\end{figure}

\section{Conclusion and Discussion}\label{sec:conclusion}
This article provided a gentle introduction to gradient flows as a unifying framework for the design 
and analysis of optimization and sampling algorithms. 
Three key elements are involved in specifying a gradient flow: the space of interest (parameter or distribution); 
an energy function defined on that space to be minimized; and a geometric notion of a 
gradient. After discussing different versions of gradient flows for optimization, we mostly focused on Langevin dynamics 
as a sampling analog to gradient descent and discussed its generalizations 
through ensemble preconditioning and addition of non-local terms. 
 
The flexibility of the gradient flow framework gives significant freedom in 
the choice of the energy function and geometry  to design new algorithms 
beyond the examples discussed in this article. For instance,
in \cite{arbel2019maximum} the gradient flow of the maximum mean discrepancy (as opposed to KL) w.r.t. the Wasserstein distance  is used to analyze an ensemble 
sampling algorithm. As another example, the
\emph{Stein Variational Gradient Descent} or SVGD algorithm \cite{liu2017stein} can be viewed as a 
gradient flow of KL w.r.t. to a modified Wasserstein distance defined 
from an appropriate reproducing kernel Hilbert space.

To close, we point out that many alternative approaches that do not rely on direct reference to gradient flows can be used to enhance the convergence of Langevin dynamics and ameliorate their metastable behavior. For example, \emph{Hamiltonian Monte Carlo} or HMC (see \cite{liu2001monte} Chapter 10) utilizes a proposal kernel obtained by discretization of (deterministic) Hamilton equations as opposed to (stochastic) Langevin dynamics; theoretical and empirical evidence suggests that HMC scales favorably to high-dimensional settings. As another example, momentum methods may be used to accelerate the convergence of optimization and sampling algorithms; these methods may be interpreted as arising from time discretization of higher-order systems that approximate a gradient flow structure in certain limiting regimes \cite{kovachki2021continuous}.
Non-reversible variants of the Langevin diffusion can potentially achieve faster convergence (see section 4.8 in \cite{pavliotis2014stochastic}), while umbrella, tempering, and annealing sampling strategies (see chapter 10 in \cite{liu2001monte}) enable the efficient traversing of multi-modal targets. 
\bh{Finally, recent machine learning techniques such as normalizing flows 
\cite{kobyzev2020normalizing} offer an alternative approach to sampling 
and density estimation by direct parameterization of transport maps using 
neural networks. Employing the so called NeuralODE models leads to 
formulations that closely resemble gradient flows and the 
continuity equation \eqref{eq:ContEqu} with the vector field $\nabla \varphi_t$
replaced by a neural network.
}

%- Gradient flows deep mathematical underpinnings \cite{ambrosio2008gradient}.

%- Would be nice to say something about gradient flows and PDEs and other physical 
%systems beyond even sampling.

%- Should we say sometyhing about function space setting?

% - Say something about SVGD? \cite{liu2016stein}

% \subsection{Stein Variational Gradient Descent}

% \subsection{Tempering and Annealing}

% \begin{example}[Tempering Double Well Potential]

% \end{example}

% \begin{equation}
%     d \theta_t = - \nabla V (\theta_t) \, dt + \sqrt{2 \beta^{-1} } dB_t
% \end{equation}

% \subsubsection{Non-Reversible Diffusions}
% Any vector field $\zeta$ satisfying $\textrm{div} \bigl( \zeta(\theta) \pi(\theta)\bigr)$ gives a non-reversible diffusion
% \begin{equation}\label{preconditioned-Langevin}
%     d \theta_t = - \Bigl( \nabla V (\theta_t) + \zeta(\theta_t) \Bigr) \,  dt + \sqrt{2 } dB_t,
% \end{equation}
% which still has $\pi$ as invariant distribution. Many such fields $\zeta$ may be considered, for instance 
% $$ \zeta(\theta) = - J \nabla V(\theta), \quad J = - J^\top.$$

	\paragraph{Acknowledgment} 
	NGT is supported by the NSF grant DMS-2005797.
	BH is supported by the NSF grant DMS-2208535.
	DSA is supported by the NSF grants DMS-2027056 and DMS-2237628, the DOE grant DOE DE-SC0022232, and the BBVA Foundation.

\bibliography{ExampleRefs}

\end{document}